\def\bR{\mathbb{R}}
\def\b1{\mathbb{1}}
\def\eE{\mathsf{E}}
\def\eP{\mathsf{P}}
\def\fr{{\mathfrak{r}}}
\def\fN{{\mathfrak{N}}}
\def\cA{\mathcal{A}}
\def\cB{\mathcal{B}}
\def\cC{\mathcal{C}}
\def\cE{{\mathcal{E}}}
\def\cL{\mathcal{L}}
\def\cV{{\mathcal{V}}}
\def\cY{{\mathcal{Y}}}
\def\cZ{{\mathcal{Z}}}
\def\sF{{\mathscr{F}}}
\def\sG{{\mathscr{G}}}
\def\sJ{{\mathscr{J}}}
\def\e1{\mathsf{1}}
\def\ofx{(x)}
\def\of0{(0)}
\def\d{\partial}
\def\tr{\mathrm{tr}}
\def\bf0{\mathbf{0}}
\def\cp1{\mathbb{CP}^1}
\def\matn{\mathrm{Mat}_n(\bR)}
\newtheorem{theorem}{Theorem}[section]
\newtheorem{proposition}[theorem]{Proposition}
\newenvironment{proof}[1][Proof]{\noindent\textbf{#1:} }{\hfill \rule{0.5em}{0.5em}}
\DeclareMathOperator{\diag}{diag}
\begin{document}
	
\title{\textbf{Managing counterparty credit risk via BSDEs}}
\author{\textbf {Andrew Lesniewski and Anja Richter\footnote{andrew.lesniewski@baruch.cuny.edu, anja.richter@baruch.cuny.edu} }\\
Department of Mathematics\\
Baruch College\\
One Bernard Baruch Way\\
New York, NY 10010\\
USA}
\date{\today}

\maketitle

\begin{abstract}
We discuss a general dynamic replication approach to counterparty credit risk modeling. This leads to a fundamental jump-process backward stochastic differential equation (BSDE) for the credit risk adjusted portfolio value. We then reduce the fundamental BSDE to a continuous BSDE. Depending on the close out value convention, the reduced fundamental BSDE's solution can be represented explicitly or through an accurate approximate expression. Furthermore, we discuss practical aspects of the approach, important for industry applications: (i) efficient numerical methodology for solving a BSDE driven by a moderate number of Brownian motions, and (ii) factor reduction methodology that allows one to approximately replace a portfolio driven by a large number of risk factors with a portfolio driven by a moderate number of risk factors.
\end{abstract}

\tableofcontents

\newpage

\section{\label{introSec}Introduction}

Counterparty credit risk has recently moved to the forefront of attention of the financial industry and government regulatory agencies. Prior to 2007, financial institutions used to make provisions for possible counterparty defaults based on \emph{ad hoc} assumptions and proprietary metrics. As a result of the financial crisis of 2007 - 2008 and the subsequent stricter regulatory requirements, careful analysis and more rigorous modeling methodologies of counterparty credit risk have become a centerpiece of prudent risk management practice. In particular, these developments led to the establishment of special business units within financial institutions whose sole purpose is to monitor and mitigate counterparty credit risk. 
Summaries of various approaches to counterparty credit risk modeling can be found in \cite{cbb14}, \cite{g15}, \cite{greg15}.

In a counterparty credit risk model, one considers two parties, namely a dealer (``a bank'') and a client (``a counterparty'') such as an asset manager, hedge fund or corporate, which have a portfolio of over the counter (OTC) transactions between them\footnote{Note that there is a tendency towards central clearing of transactions, however a large portion of transactions is still done OTC.}. Over the time horizon of the transactions, a default of either the bank or the counterparty may occur. The objective of the model is to put a dollar value on this default risk and develop strategies to mitigate it. The portfolio of transactions is typically structured as a collection of netting sets. From the risk management perspective, the cash flows within the same netting set are allowed to partially offset each other.

Generally, counterparty credit risk is managed through maintaining appropriate reserves of cash and high quality financial assets. In the case of Credit Support Annex (CSA) \cite{Isda16} transactions, the two parties exchange and manage collateral. Additionally, a fully fledged counterparty credit risk model takes into account the impact of margining and funding costs. 

In this paper, we are concerned with a general approach to counterparty credit risk modeling. A natural modeling framework for counterparty credit risk, which we adopt in our discussion, is provided by backward stochastic differential equations (BSDEs). In particular, this approach allows for an efficient and fully consistent description of counterparty default valuation along with its hedging strategy, and can be easily implemented numerically. BSDEs, in the context of finance, have been studied in the literature since the 1990's, see \cite{kpq97}, and have recently been applied to counterparty credit risk modeling, see e.g \cite{cbb14}, \cite{bcs16}. 

Our approach extends the partial differential equations (PDE) framework developed earlier by Burgard and Kjaer \cite{bk12}, see also \cite{gk14}, \cite{gkd14}, \cite{kgb15}. Their methodology is based on a dynamic replication strategy involving the underlying assets as well as both parties' credit sensitive zero coupon bonds, and it culminates in the fundamental PDE for the counterparty credit sensitive value of an instrument. They also provide closed form solutions to the fundamental PDE in a number of relevant cases. Here, we extend their approach to a more general multivariate diffusion setting including stochastic interest rates and default intensities. 

Following a similar strategy, we formulate the counterparty credit risk model in purely probabilistic terms. The dynamic replication approach leads to a jump diffusion BSDE with a random time horizon. The reason why a jump may occur in the BSDE is that the possible default of either the counterparty or the bank itself (and associated changes in the value of the portfolio) can occur at a random time. Equations of this type are usually very challenging to solve numerically  \cite{gl15}. Motivated by the method developed in \cite{kln13}, we reduce our fundamental jump diffusion BSDE to a continuous BSDE with a fixed terminal time. 

In case of a counterparty or bank default, their assets are liquidated and the claims of their creditors are settled. Different close out conventions are used to determine the value of the portfolio of transactions between the two parties at default. Here, we discuss in detail two explicit close out conventions. The solutions to the corresponding fundamental BSDEs are interpreted in terms of the commonly used valuation adjustments (the ``XVAs''). In one of the cases, a closed form representation of the solution is available, while in the other one, we derive an accurate approximation to the solution.

An important practical aspect is the computational feasibility of the solution to the fundamental BSDE. In practice, the portfolio of transactions between the dealer and the counterparty may consist of hundreds (or thousands) of different positions. Naive modeling of such a portfolio leads to a high dimensional stochastic system. We discuss an approximation method that allows us to reduce the full model to model driven by a moderate number of risk factors. We then prove that reduced factor BSDE indeed approximates the fundamental BSDE. Moreover, we briefly discuss the question of choosing an appropriate pricing measure.

For continuous diffusion BSDEs, efficient numerical schemes have been developed, see e.g. \cite{bt04}. We apply a scheme to numerically solve the reduced fundamental BSDE which requires an efficient methodology to calculate conditional expected values. To this end, we use a variant of the Longstaff-Schwartz \cite{ls01} regression technique that is adapted to our setting.  Specifically, we propose the use of Hermite polynomials as basis functions. Hermite polynomials have various advantages, for example they exhibit an addition formula and a martingale property, which provide practical means for calculating conditional expectations. Additionally, using the Hermite architecture allows us to represent a part of the solution of the BSDE in terms of explicit expressions. We continue the study of the numerical aspects of the problem in \cite{llr16}.

The paper is organized as follows. We present our model setup and develop the fundamental equations describing the value of a portfolio with and without counterparty credit risk in Section \ref{bsdeSec}. The special case of deterministic interest rates and dividends, leading to a Burgard-Kjaer type of PDE, is covered in Section \ref{sec:BKPDE}. In Section \ref{sec:valueAdj}, we discuss the total value adjustments for the two most common close out values. We then turn our attention to practical issues such as the choice of pricing measure and risk factor reduction in Section \ref{sec:PricingMeasureFactorReduction}. Finally, we propose a numerical algorithm to solving the fundamental BSDE using Monte Carlo methods in Section \ref{numSec}. The key technical results are proved in the Appendix.

\section{\label{bsdeSec}The fundamental equations}

\subsection{The model}

In this Section we present a mathematical formulation of the problem of modeling counterparty credit risk between two counterparties: $\cB$ (a ``bank'') and $\cC$ (a ``client''). The equations written below assume the perspective of the bank $\cB$. We study contracts on an asset $S$ between the bank and its counterparty $\cC$, both of which may default. In our setting neither of the two possible defaults have an effect on the asset $S$.

With this situation in mind, we consider a probability space $(\Omega, \sG, \eP)$ and let the market filtration $\sF := (\sF_t)_{t \geq 0}$ be generated by an $n$-dimensional Brownian motion $W$ and augmented by all $(\sG, \eP)$-null sets. The information on default events is represented by the filtration $\sJ:= (\sJ_t)_{t \geq 0}$ to be specified later. We define the enlarged filtration $\sG := (\sG_t)_{t \geq 0}$ by $\sG_t := \sF_t \vee \sJ_t $, for all $t \geq 0$. Then $(\Omega, \sG,\eP)$ is a filtered probability space satisfying the usual conditions. See for example \cite[VI.3]{p04} for the properties of progressive enlargements of filtrations.

Let $S=(S_1,\ldots,S_n)^\top$ denote the price process, namely an $n$-dimensional Markov process with infinitesimal generator $\mathcal A_t$. The dynamics of the asset under the measure $\eP$ are
\begin{equation}\label{assetDyn}
\begin{split}
dS_t &= \mu(t,S_t) dt + \sigma(t,S_t) dW_t,\\
S_0 &= s_0.
\end{split}
\end{equation}
Here, $\mu: \bR_+ \times \bR^n \to \bR^n$ and $\sigma : \bR_+ \times \bR^n \to \matn$ are deterministic functions and $W$ is a standard $n$-dimensional Brownian motion. We assume the usual conditions for $\mu$ and $\sigma$ to guarantee the existence and uniqueness of the strong solution to this SDE. Additionally, $s_0$ is an $n$-dimensional real vector.

We consider a netting set $\mathfrak{N}$ consisting of $n$ instruments which may comprise of derivatives and cash type with underlying asset $S$. From the counterparty risk management perspective, transactions with a fixed counterparty are allowed to be offset against each other within a netting set. The composition of netting sets is set forth by bilateral agreements. We denote the time $t$-value of the netting set by $\mathfrak{N}(S_t)$.

Additionally, we consider the riskless (default-risk-free) bank account $B^R$. Its dynamics is given by
\begin{equation} 
\begin{split}
dB^R_t &= r_t B^R_t dt, \\
B^R_0&=1,
\end{split}
\end{equation}
where $r$ is a stochastic credit riskless interest rate. In the following, we assume that $r$ as well as all other rates and dividends are adapted with respect to the market filtration $\sF$. From the financial perspective, $r$ is the rate paid by a theoretical ``central margin account'', and can be proxied by the OIS rate.

In our model we allow for both, the bank and its counterparty to default. We let $\tau^\cB$ and $\tau^\cC$ be the random default times of the bank and the counterparty respectively and denote their indicator processes by counting processes
\begin{equation*}
J^\cB_t = \e1_{\tau^\cB \leq t},
\end{equation*}
and
\begin{equation*}
J^\cC_t = \e1_{\tau^\cC \leq t}.
\end{equation*}
The natural filtration generated by $J^\cB$ and $J^\cC$ constitutes the default event filtration $\sJ$, i.e. $\sJ_t := \sigma ( J^\cB_s, J^\cC_s : s \leq t )$. Both counting processes $J^\cB$, $J^\cC$ are assumed to be Cox processes, i.e. they have stochastic, time-dependent intensities $\lambda^\cB$, $\lambda^\cC$, i.e.
\begin{equation*}
\lambda^\cB_t dt = \eE[dJ^\cB_t\,|\,\mathcal G_{t-}],
\end{equation*}
and
\begin{equation*}
\lambda^\cC_t dt = \eE[dJ^\cC_t\,|\,\mathcal G_{t-}].
\end{equation*}

We denote the bank's and counterparty's default risky, zero-recovery ZCBs $P^\cB$ and $P^\cC$ with respective maturities $T^\cB$ and $T^\cC$. They follow the dynamics
\begin{equation*}
\begin{split}
dP^\cB_t &= r^\cB_t P^\cB_t dt - P^\cB_{t-} dJ^\cB_t, \\
dP^\cC_t &= r^\cC_t P^\cC_t dt - P^\cC_{t-} dJ^\cC_t.
\end{split}
\end{equation*}
Here, the adapted stochastic processes $r^\cB, r^\cC$ are the yields on $P^\cB$ and $P^\cC$, respectively.

The legal framework for OTC derivative trades is provided by the {\it ISDA Master Agreement} \cite{Isda16} which contains a common core and adjustable terms to be agreed on by both parties. The agreement aims at mitigating (counterparty) risk by documenting aspects like netting, collateral cash-flows, default events and the close out process (see e.g. \cite{g15,greg15}). In a world, where usually several transactions with a counterparty occur, netting allows the two parties to offset what they owe to one another. 

The financial crisis in 2008 fueled the need for derivative pricing methodologies that include aspects of counterparty credit risk. The goal is to find the value $\hat V$ of a netted portfolio of derivatives $\mathfrak{N}$ on $S$ allowing for both the bank $\cB$ as well as its counterparty $\cC$ to default. More precisely, we let $\hat V=\hat V(t,T)$ denote the time $t$ value of the netting set $\mathfrak{N}$ with time horizon $T$, which is assumed to satisfy $T \leq \min(T^\cB,T^\cC)$. The value of the netting set $\mathfrak{N}$ between the bank and the counterparty without counterparty default risk is denoted by $V=V(t,T)$. The difference between these two values, denoted by $A$, is called the total valuation adjustment, i.e. $\hat V = V+A$.

To mitigate counterparty credit risk, the two parties exchange collateral in form of cash or high quality financial instruments. The mechanics of collateral in OTC transactions are specified by the CSA which is usually amended to the ISDA Master Agreement. Typically, one distinguishes between two kinds of collateral, namely the initial margin and the variation margin. The former is posted by both counterparties without any netting taking place\footnote{An important exception to this rule is when the counterparty is a central counterparty (CCP). In this case, only clearing members post collateral with the clearing house.}. The amount of initial margin is calculated using risk based methods such as VaR, CoVaR or stress tests. These calculations involve historical simulations, typically using an observation window of 1 to 5 years, see e.g. \cite{g15}. To ensure that the collateral can be retrieved in a default scenario, it is segregated and cannot be used to, for example, fund other positions. In contrast, variation margins, which we denote as $X$, are calculated frequently based on the market value of the transactions, and they can be netted and rehypothecated. We denote the initial margin posted by the bank $\cB$ to the counterparty by $I^{TC}$, whereas the amount from the counterparty is denoted by $I^{FC}$. The margin value adjustment (MVA) is the bank's cost of posting the initial margin $I^{TC}$ to the counterparty over the time $[0,T]$. The guidelines for margin requirements are set forth in \cite{b15}. 

While the collateral ensures that the defaulting party partially meets its contractual obligations in the event of a default, regulatory capital is designed to help the surviving party manage a potential loss arising from the default.
Specific guidelines for calculating the regulatory capital $K$, the amount of reserves the bank must hold, are provided in the {\it Basel III} document which was introduced in 2010. The cost of holding this capital over the time $[0,T]$ is called capital value adjustment (KVA).

Another component of the total value adjustment is the funding value adjustment (FVA). Essentially, the FVA accounts for the costs of funding of uncollateralized (or partly collateralized) positions. There is a debate around how the funding adjustment should be treated in a framework like ours, see \cite{greg15}, pp. 349-356, and \cite{hw14}, \cite{ads16}.

In the following we first find the default free value $V$ of the contract in terms of a conditional expectation. We present the cash-flow netting at default and the cash flows associated with the portfolio that is set up to replicate the value $\hat V$. Then the fundamental BSDE describing the counterparty credit risky value $\hat V$ is derived.
Note that, from now on, all interest rates are considered adapted stochastic processes unless stated otherwise.

\subsection{Counterparty credit risk free value}
\label{sec:valueV}

The value $V$ of the netting set $\mathfrak N$ without credit counterparty risk depends solely on the price process $S$ and is derived similarly to the price in the classical Black-Scholes model. As usual, we can set up a self-financing portfolio $\Pi$ replicating $V$. However, contrary to the classical pricing theory, the position in $S$ is not funded at a single riskless rate $r$, but instead financed by a repurchasing agreement (repo) at a repo rate $q^S$. The securities are pledged as collateral against cash to purchase the securities. In practice, a haircut is applied to the amount of cash received against the collateral, that is the loan size is smaller than the current face value of the pledged securities. The amount of haircut depends on the quality of the securities. Here, for simplicity, we shall assume a zero haircut; it is straightforward to adapt our calculations to accommodate for non-zero haircuts. 

The replicating portfolio $\Pi$ consists of a position in $\delta$ units of $S$, $\varphi^S$ units in a repo cash account $B^S$, and $\varphi^R$ units of $B^R$, i.e. we must have 
\begin{equation} \label{repPortfolioV}
\begin{split}
V_t &= \Pi_t \\
& = \delta^\top_t S_t  + (\varphi^S_t)^\top B_t^S + \varphi^R_t B_t^R ,
\end{split}
\end{equation}
for all times $t \in [0,T]$. Note that $\delta$ and $\varphi^S$ are $n$-dimensional vectors. We describe the different cash flows in the following:

\noindent 
\emph{Securities funding:} We assume the position in the security $S$ is exclusively financed by the (default-risk-free) repo cash account $B^S$, which means, we always have
\begin{equation} \label{securitiesFunding}
 \delta^\top_t S_t = - 	(\varphi^S_t)^\top B^S_t.
\end{equation}
This equation stems from the fact that if we enter into a long position in $S$, i.e. $\delta >0$, we need to finance this buy by receiving a fully collateralized loan through a repo agreement, i.e. $\varphi^S<0$. On the other hand, if we sell $S$, i.e. $\delta <0$, we invest the received cash into $B^S$, i.e. $\varphi^S>0$.
The cash account $B^S$ accrues at the repo rate $q^S$ and decreases at the dividend yield $\gamma^S$ of the underlying asset $S$, i.e. it evolves according to
\begin{equation} \label{dB^S}
\begin{split}
dB^S_t &= \diag(q^S_t - \gamma^S_t) B^S_t dt, \\
B^S_0&=1,
\end{split}
\end{equation}
where $\diag(a)$ is the diagonal matrix whose diagonal is given by the vector $a$. Recall that, according to our assumption, all rates, like $\gamma^S$ and $q^S$ here, are adapted stochastic processes.

\noindent 
\emph{Riskless deposit:} From equations \eqref{repPortfolioV} and \eqref{securitiesFunding}, we see that the amount $V$ is financed/earns the riskless rate $r$, more precisely we get
\begin{equation} \label{risklessDeposit}
V_t = \varphi^R_t B_t^R.
\end{equation}

The self-financing condition then implies that the replicating portfolio $\Pi$ has the dynamics
\begin{equation*}
\begin{split} 
dV_t &=d\Pi_t \\
& = \delta_t^\top dS_t + (\varphi^S_t)^\top dB_t^S + \varphi^R_t dB_t^R,
\end{split}
\end{equation*}
which, together with the dynamics for $S$, $B^S$ and $B^R$, leads to
\begin{equation*}
\begin{split} 
dV_t 
&= \delta_t^\top  \sigma(t,S_t) dW_t + \big(\delta_t^\top \mu(t,S_t) - (\varphi^S_t)^\top \diag(\gamma^S_t -q_t^S) B^S_t + \varphi^R_t r_t B^R_t  \big) dt, \\
&= \delta_t^\top  \sigma(t,S_t) dW_t + \big(\delta_t^\top \mu(t,S_t) + \delta_t^\top \diag(\gamma^S_t -q_t^S) S_t + r_t V_t \big) dt,
\end{split}
\end{equation*}
where we have used equations \eqref{securitiesFunding} and \eqref{risklessDeposit} in the last equality.

If we now set $Z_t^\top = \delta_t^\top \sigma(t,S_t)$, we can formulate the default free portfolio dynamics in terms of a BSDE with terminal value $\mathfrak N(S_T)$, i.e. 
\begin{equation} \label{BSBSDE}
\begin{split} 
-dV_t &=  \big(- Z_t^\top\sigma(t ,S_t)^{-1} (\mu(t ,S_t) +  \diag(\gamma^S_t-q^S_t) S_t) - r_t V_t \big)  dt - Z_t^\top dW_t,
\\ 
V_T &= \mathfrak N(S_T) .
\end{split}
\end{equation}
The value $V$, namely the first part of the solution $(V,Z)$ of this equation, can be found explicitly, as the driver is linear in $V$ and $Z$. More precisely, from Appendix \ref{app:LinBSDE} we obtain 
\begin{equation} \label{solV}
V_t = E_t \Big[ e^{- \int_t^T r_u du} \, \Gamma_{t,T} \, \mathfrak{N}(S_T)  \Big],
\end{equation}
where
\begin{equation}\label{gammaDef}
\Gamma_{t,T} = \cE\Big(-\int_t^T\big(\sigma(u ,S_u)^{-1} (\mu(u ,S_u) +  \diag(\gamma^S_u-q^S_u) S_u) \big)^\top dW_u \Big).
\end{equation}
The equation above extends the classic Black-Scholes model in that it explicitly accounts for position financing cost.
In the following we will regard the process $V$ as a known input into the counterparty credit risk model.

\subsection{Close out netting}
\label{sec:CloseOut}

Over the lifetime of the portfolio of transactions, the bank or the counterparty may default. At the time of default, the counterparty credit risk adjusted value of the portfolio $\hat V$ is determined by the terms specified in the ISDA master agreement and can take several forms. In general the value at default is impacted by the party that defaults first, the close out value $M$ and the collateral $I$ and $X$. In the literature, see \cite[section 3.1.1]{g15} for more background information, several different conventions to determine the value at default can be found.

In the following we use the notation $x^+=\max\{ x,0\}$, and $x^- = \min \{ x,0\}$\footnote{Note that we are following here the convention used in the financial literature, as opposed to the convention used in the mathematical literature according to which $x^- = \max \{ -x,0\}$.}. In a situation where the counterparty defaults, the bank is already in possession of the collateral $X+I^{FC}$. Now, if the unsecured value $M - (X+I^{FC})$ is negative, i.e. the bank owes money to the counterparty, the bank has to pay the full outstanding amount $(M - X -I^{FC})^-$. Otherwise the bank is able to recover only a fraction of the outstanding value, more precisely $R^\cC (M - X - I^{FC})^+$, where $R^\cC \in [0,1]$ is the recovery rate in case $\cC$ defaults. We are not concerned here with recovery rate modeling and so, for simplicity, we assume that $R^\cC$ is deterministic. In reality, recovery rates are an unknown random variable, not necessarily measurable with respect to $\sG_t$.
In summary, we see that the value at default, in case $\cC$ defaults, has the form
$$ \theta^\cC = X+I^{FC} + R^\cC (M - X - I^{FC})^+ + (M - X -I^{FC})^-.$$ 

Similarly, if the bank itself defaults, it has the right to proper fulfillment of the contract and hence, in addition to the collateral amount $X-I^{TC}$, it receives the outstanding balance $(M -X +I^{TC})^+$. If $M < X-I^{TC}$, the bank pays a fraction of its own obligation, namely $R^\cB (M - X +I^{TC})^-$, where $R^\cB \in [0,1]$ is the bank's (deterministic) recovery rate. This means the value at default, at the bank's own default, can be expressed as follows:
$$ \theta^\cB = X - I^{TC} + (M - X + I^{TC})^+ + R^\cB(M - X + I^{TC})^-.$$ 

As a consequence, the portfolio value at default (at time $\tau = \tau^\cB \wedge \tau^\cC$) is explicitly given by
\begin{equation} \label{theta}
\begin{split}
\theta_\tau 
&= 
\e1_{\tau^\cC < \tau^\cB} \theta^\cC_\tau + \e1_{\tau^\cB < \tau^\cC} \theta^\cB_\tau \\
&=\e1_{\tau^\cC < \tau^\cB} (X_\tau + I^{FC}_\tau + R^\cC \big(M_\tau-X_\tau - I^{FC}_\tau)^++(M_\tau-X_\tau - I^{FC}_\tau)^-\big) \\
& \quad + \e1_{\tau^\cB < \tau^\cC}\big(X_\tau - I^{TC}_\tau + (M_\tau-X_\tau + I^{TC}_\tau)^+ + R^\cB (M_\tau-X_\tau + I^{TC}_\tau)^-\big).
\end{split}
\end{equation}
To consider the credit risky portfolio value $\hat V$ prior to default, we set up a replicating portfolio including a cash account, which reflects funding and collateral exchange related cash flows.

\subsection{Dynamic portfolio replication with counterparty credit risk}

Classical pricing theory is developed around the assumption that market participants can freely borrow and lend, without the necessity of exchanging collateral, at a single interest rate, namely the riskless interest rate $r$. Here we take a more realistic approach and specify the different funding costs associated with different types of lending. Additionally, we include collateral margining and the banks interest earning/paying on different types of capital. Our goal is to build a self-financing replicating portfolio for the counterparty credit risky portfolio value $\hat V$, which we pursue in the next section. 

The replicating portfolio $\hat\Pi$ comprises of $\delta$ units of $S$, $\alpha^\cB$ units of $P^\cB$, $\alpha^\cC$ units of $P^\cC$, and $\varphi$ units of the vector of cash accounts $B$, where $\delta,\ \alpha^\cB,\ \alpha^\cC$ and $\varphi$ are stochastic processes. The vector of cash accounts $B$ is composed of several accounts each with their own rate of accumulation which will be discussed in more detail in the following. More precisely, the account $B$ decomposes into $n+6$ cash accounts which we write as vector
\begin{equation*}
B = (B^S,B^\cC,B^X,B^{TC},B^{FC},B^K,B^F)^\top. 
\end{equation*}
We assume that each cash account is default-risk-free and has the value 1 at $t=0$, e.g. $B^K_0=1$. The corresponding strategy $\varphi$ is given as
\begin{equation*}
\varphi = (\varphi^S,\varphi^\cC,\varphi^X,\varphi^{TC},\varphi^{FC},\varphi^K,\varphi^F)^\top.
\end{equation*}
We need the value of $\hat\Pi_t$ at each time $t \leq T\wedge\tau$ to replicate the value $\hat V_t$, i.e. $\hat\Pi_t = \hat V_t$, or equivalently
\begin{equation}
\begin{split}
\hat V_t &=\hat\Pi_t\\
& = \delta_t^\top S_t + \alpha^\cB_t P^\cB_t + \alpha^\cC_t P^\cC_t + \varphi_t^\top B_t
\end{split}
\end{equation}

\noindent 
\emph{Securities funding:} As explained in Section \ref{sec:valueV}, taking a position in the underlying $S$ requires entering a repo transaction. The transaction is fully collateralized and we must have 
\begin{equation} \label{securitiesFundingVhat}
\delta^\top_t S_t = - 	(\varphi^S_t)^\top B^S_t,
\end{equation}
where the dynamics of the cash account are again given by \eqref{dB^S}.

\noindent 
\emph{Counterparty bond funding:} Similarly, the bank enters into position in counterparty's bonds $P^\cC$ through a repo transaction, i.e. we must always have
\begin{equation} 
\alpha^\cC_t P^\cC_t = - \varphi^\cC_t B^\cC_t.
\end{equation}
The evolution of the repo cash account $B^\cC$ is given by 
$$dB^\cC_t = q^\cC_t B^\cC_t dt,$$
where $q^\cC$ is the repo rate for the bonds $P^\cC$. 

The bank and its counterparty have to satisfy regulatory and collateral requirements for bilateral transactions for which the rules are set forth by the CSA and government regulatory agencies.

\noindent 
\emph{Initial margin:} Initial margins are exchanged at the inception of the contract and held in segregated accounts which leaves them unaffected in case of a default event. Note that initial margins are not netted. The initial margin  $I^{FC} \geq 0$, that the counterparty posts with the bank, finances $\varphi^{FC}$ units of the margin account $B^{FC}$, i.e. we have
\begin{equation}
I^{FC}_t = \varphi^{FC}_t B^{FC}_t.
\end{equation}
The dynamics of $B^{FC}$ is given by 
\begin{equation*} 
dB^{FC}_t = - r^{FC}_t B^{FC}_t dt,
\end{equation*}
representing the interest rate $r^{FC}$ the bank pays it's counterparty on the initial margin received.

In the same way the initial margin to the counterparty, $I^{TC} \geq 0$, is held in $\varphi^{TC}$ units of another margin account $B^{TC}$, meaning we have
\begin{equation}
- I^{TC}_t = \varphi^{TC}_t B^{TC}_t.
\end{equation}
The dynamics of the account $B^{TC}$ is
\begin{equation*} 
dB^{TC}_t = - r^{TC}_t B^{TC}_t dt,
\end{equation*}
where $r^{TC}$ is the rate received by the bank for the initial margin posted. 

\noindent 
\emph{Variation margin:} Unlike the initial margin, the variation margin $X$ is usually fully rehypothecable which we will come back to when we consider the funding of the different margins. In case $X>0$, corresponding to the counterparty having posted collateral with the bank, it is then financed by $\varphi^X$ shares of the margin account $B^X$. The CSA rules dictate that an interest rate $r^X$ is to be paid to the counterparty. On the other hand, $X<0$ describes the bank's collateral posted to its counterparty into the account $B^X$, earning the rate $r^X$. To sum up, we always have 
\begin{equation}
X_t = \varphi^X_t B^X_t,
\end{equation}
and the dynamics of $B^X$ is 
$$dB^X_t = - r^X_t B^X_t dt.$$

\noindent 
\emph{Regulatory capital cash flows:} We include the cost of regulatory capital $K$ into the model. The capital is raised from equity and debt investors which amounts to holding $\varphi^K<0$ units of a cash account $B^K$, i.e. it holds
\begin{equation}
K_t = - \varphi^K_t B^K_t.
\end{equation}
Denoting the cost of capital with $r^K$, the dynamics of $B^K$ are
\begin{equation*}
dB^K_t = r^K_t B^K_t dt.
\end{equation*}

\noindent 
\emph{Funding of uncollateralized positions:} So far, we addressed the question of funding only for the positions in the underlying stock $S$, the default-risky counterparty bond $P^\cC$ and the regulatory capital $K$. Here, we deal with the funding of the gap between the derivative value $\hat V$ and the collateral. Recall that in the classical Black Scholes model, there is no collateral and the delta position in the underlying stock is financed using the risk free bank account. Hence the amount that needs to be funded is the difference between the derivative value and the delta position. 
In our case the collateral comprises of initial and variation margin, $I^{TC}$, $I^{FC}$ and $X$ respectively. The value $I^{TC}$ to be paid to the counterparty always needs to be funded, whereas the variation margin $X$ lowers the funding requirement if $X>0$ but otherwise raises it as well. Since the initial margin $I^{FC}$ from the counterparty is positive, it would lower the funding requirement too. However, initial margins are not rehypothecable and hence the value that needs to be funded is 
\begin{equation*}
\hat V_t -(X_t-I^{TC}_t).
\end{equation*}
The bank has two sources of funding. One way of financing is issuing its own bonds $P^\cB$, the other is external funding through a financing account. 

The value of the financing account $\varphi^F_t B^F_t$ together with the position $\alpha^\cB_t P^\cB_t$ hence always needs to be
\begin{equation} \label{fundequ}
\varphi^F_t B^F_t + \alpha^\cB_t P^\cB_t =  \hat V_t - X_t + I^{TC}_t.
\end{equation}
The dynamics of $P^F$ depend on whether the value $\varphi^F_t B^F_t$ is positive or negative. In the former case, this cash is invested at the riskless rate $r$ in order to not introduce further credit risk, whereas in the latter case funds are raised at the cost of $r^F$. Consequently, the evolution of $B^F$ is given by
 \begin{equation*}
 dB^F_t = r^\pm_t B^F_t dt,
 \end{equation*} 
where $r^{\pm}_t = r_t \e1_{ \{\varphi^F_t B^F_t >0\} } + r^F_t \e1_{ \{\varphi^F_t B^F_t <0\} }$. 

\bigskip 

Having specified the structure of the different cash accounts, we are now ready to define the dynamics of the portfolio $\Pi$ that replicates $\hat V$ prior to a default event. Since we require that the replicating portfolio $\hat \Pi$ is self-financing, i.e. any changes in the portfolio value arise exclusively from changes in the underlying instruments, we must have that 
\begin{equation}
\begin{split}
d \hat V_t &= d\hat\Pi_t\\
& = \delta_t^\top dS_t + \alpha^\cB_t dP^\cB_t + \alpha^\cC_t dP^\cC_t + \varphi_t^\top dB_t.
\end{split}
\end{equation}
Along with the dynamics for $S$, $P^\cB$ and $P^\cC$ and the above dynamics of the different cash accounts, we obtain
\begin{equation*}
	\begin{split}
	d \hat \Pi_t
	&=
	\delta_t^\top \left(\mu(t,S_t)dt + \sigma(t,S_t)dW_t \right) 
	+ \alpha^\cB_t \left(r^\cB_t P^\cB_t dt - P^\cB_{t-} dJ^\cB_t \right) 
	+ \alpha^\cC_t \left(r^\cC_t P^\cC_t dt - P^\cC_{t-} dJ^\cC_t \right) 
	\\
	& \quad
	- (\varphi_t^S)^\top \diag(\gamma^S_t-q^S_t) B^S_t  dt 
	+ \varphi_t^\cC q^\cC_t B^\cC_t dt
	- \varphi^X_t r^X_t B^X_t dt 
	\\
	& \quad 
	+ \varphi^{TC} r^{TC}_t B^{TC}_t dt 
	- \varphi_t^{FC} r^{FC}_t B^{FC}_t dt 
	+ \varphi_t^K r^K_t B^K_t dt 
	+ \varphi^F_t r_t^\pm B^F_t dt,
	\\
	&=
	\delta_t^\top \left(\mu(t,S_t)dt + \sigma(t,S_t)dW_t \right) \\
	& \quad
	+ \alpha^\cB_t \left(r^\cB_t P^\cB_t dt - P^\cB_{t-} dJ^\cB_t \right) \\
	& \quad
	+ \alpha^\cC_t \left(r^\cC_t P^\cC_t dt - P^\cC_{t-} dJ^\cC_t \right) \\
	& \quad
	+ \delta_t^\top \diag(\gamma^S_t-q^S_t) S_t  dt - q^\cC_t \alpha^\cC_t P^\cC_t dt \\
	& \quad
	+  \big( r^{TC}_t I^{TC}_t - r^{FC}_t I^{FC}_t - r^X_t X_t - r^K_t K_t \big) dt
	\\
	& \quad 
	+ \big( r_t (\hat V_t - X_t  + I^{TC}_t - \alpha^\cB_t P^\cB_t) + (r^F_t - r_t)(\hat V_t - X_t + I^{TC}_t- \alpha^\cB_t P^\cB_t)^{-}\big)dt,
	\end{split}
\end{equation*}
where in the last equality we have used formulas \eqref{securitiesFundingVhat}-\eqref{fundequ}. We can simplify the dynamics to
\begin{equation*}
	\begin{split}
		d \hat \Pi_t
		&=
		\delta_t^\top \sigma(t,S_t)dW_t
		- \alpha^\cB_t P^\cB_{t-} dJ^\cB_t
		- \alpha^\cC_t P^\cC_{t-} dJ^\cC_t  \\
		& \quad
		+ \Big( \delta_t^\top \big( \mu(t,S_t) +\diag(\gamma^S_t-q^S_t) S_t \big)
		+ \alpha^\cB_t (r^\cB_t-r_t) P^\cB_t
		+ \alpha^\cC_t (r^\cC_t-q^\cC_t) P^\cC_t
		\\
		& \quad
		+ (r^{TC}_t + r_t) I^{TC}_t 
		- r^{FC}_t I^{FC}_t 
		- ( r^X_t + r_t ) X_t 
		- r^K_t K_t
		+ r_t \hat V_t   
		\\
		& \quad 
		+ (r^F_t - r_t)(\hat V_t - X_t + I^{TC}_t- \alpha^\cB_t P^\cB_t)^{-}\Big)dt .
	\end{split}
\end{equation*}
This equation describes the dynamics of the replicating portfolio prior to $T\wedge\tau$.

\subsection{Fundamental BSDE}

Our next goal is to formulate the above replication problem in terms of a BSDE. To this end we set
\begin{equation} \label{ZUU}
\begin{split}
\hat Z_t^\top &= \delta_t^\top \sigma(t,S_t), \\
U^\cB_{t-} &= - \alpha^\cB_t P^\cB_{t-} ,\\
U^\cC_{t-} &= - \alpha^\cC_t P^\cC_{t-}.
\end{split}
\end{equation}
Since $\hat V=\hat \Pi$, we get
\begin{equation*}
\begin{split}
 d \hat V_t
&=
 \hat Z_t^\top dW_t
 +U^\cB_{t-} dJ^\cB_t
 + U^\cC_{t-} dJ^\cC_t
\\ & \quad
+ \Big( \hat Z_t^\top \sigma(t,S_t)^{-1} \big( \mu(t,S_t) +\diag(\gamma^S_t-q^S_t) S_t \big)
- (r^\cB_t-r_t) U^\cB_t
- (r^\cC_t-q^\cC_t) U^\cC_t
 \\
& \quad
+ (r^{TC}_t + r_t) I^{TC}_t 
- r^{FC}_t I^{FC}_t 
- ( r^X_t + r_t ) X_t 
- r^K_t K_t
+ r_t \hat V_t   
\\
& \quad 
+ (r^F_t - r_t)(\hat V_t - X_t + I^{TC}_t- \alpha^\cB_t P^\cB_t)^{-}\Big)dt .
\end{split}
\end{equation*}
Finally, defining the driver
\begin{equation} \label{fundamentalBSDEDriver}
\begin{split}
 g(t,s, & \hat v,  \hat z,u^\cB,u^\cC) = \\
&
 -\hat z^\top \sigma(t,s)^{-1} \big( \mu(t,s) +\diag(\gamma^S_t-q^S_t) s \big)
+ (r^\cB_t-r_t) u^\cB
+ (r^\cC_t-q^\cC_t) u^\cC  
\\
&
- (r^{TC}_t + r_t) I^{TC}_t 
+ r^{FC}_t I^{FC}_t 
+ ( r^X_t + r_t ) X_t 
+ r^K_t K_t
- r_t \hat v
\\
&   
- (r^F_t - r_t)(\hat v - X_t + I^{TC}_t +u^\cB)^{-}
\end{split}
\end{equation}
we can write the evolution of our hedging portfolio as the following BSDE,
\begin{equation} \label{jumpFundamentalBSDE}
\begin{split}
-d \hat V_t &=  g(t,S_t,\hat V_t,\hat Z_t,U_t^\cB,U^\cC_t) dt - \hat Z_t^\top dW_t  -U^\cB_{t-} dJ^\cB_t  - U^\cC_{t-} dJ^\cC_t , \quad t \in [0,\tau \wedge T],
\\
\hat V_{\tau \wedge T} &=  \e1_{ \tau > T } \mathfrak{N}(S_T) + \e1_{ \tau \leq T } \theta_{\tau}.
\end{split}
\end{equation}
We refer to this equation as the {\it fundamental BSDE} of counterparty credit risk modeling. Recall that the default value $\theta_\tau$ has been defined in Section \ref{sec:CloseOut}.

Unlike standard SDEs which in applications are supplemented by initial value conditions, a BSDE is posed with a terminal value condition. The terminal condition for the fundamental BSDE requires accounting for three possible outcomes. If neither the bank nor the counterparty default before the final maturity $T$, our process ends at $T$, with $\hat V_T = \mathfrak{N}(S_T) =V_T$. On the other hand, if a default occurs prior to $T$, the portfolio is closed out and the process terminates early. In this case, the final portfolio value $\theta_\tau$ depends on of which of the parties defaults first, and is given by \eqref{theta}. 

Note that the fundamental BSDE has a possible jump in the event of a default at time $\tau$, which makes its numerical implementation rather complex. Fortunately, there is an explicit mapping of this equation onto a continuous BSDE, which is conceptually clear and allows for a standard numerical implementation. This transformation is presented in detail in Appendix \ref{transformation}. Specifically, we show there that $\hat V_t$, $\hat Z_t$, $U^\cB_t$, and $U^\cC_t$ can be represented, for all $t \in [0, T]$, as
\begin{equation} \label{solFundamentalBSDE}
\begin{split} 
\hat V_t &= \hat \cV_t \e1_{t <\tau} +  \theta_\tau \e1_{t \geq \tau}, \\
\hat Z_t &= \hat \cZ_t \e1_{t \leq\tau} ,\\
U^\cB_t &= (\theta^\cB_t - \hat\cV_t) \e1_{t \leq \tau} ,\\
U^\cC_t &= (\theta^\cC_t - \hat\cV_t) \e1_{t \leq \tau} ,
\end{split}
\end{equation}
where the pair of processes $(\hat\cV_t,\hat\cZ_t)$ is the solution to the following BSDE:
\begin{equation} \label{redFundamentalBSDE}
\begin{split}
-d \hat\cV_t &=  g(t,S_t,\hat\cV_t,\hat\cZ_t,\theta^\cB_t - \hat\cV_t,\theta^\cC_t - \hat\cV_t) dt -\hat\cZ_t^\top dW_t, 
\\
\hat\cV_{ T} &= \mathfrak{N}(S_T), \quad t \in [0, T].
\end{split}
\end{equation}
We refer to this equation as the {\it reduced fundamental BSDE}. We emphasize here that it is enough to find the solution to the reduced BSDE \eqref{redFundamentalBSDE} and then use \eqref{solFundamentalBSDE} to find the credit-risky portfolio value $\hat V$. Hence \eqref{redFundamentalBSDE} will play a central role in the remainder of the paper.

As a simple yet instructive example of the above reduction, we consider the following jump BSDE:
\begin{equation} \label{easiestJumpBSDE}
\begin{split}
-dY_t &=  (\alpha Y_t +\beta U_t)dt - U_t dJ_t,\\
Y_{\tau \wedge T} &= \xi 1_{\tau>T} + \theta 1_{\tau \leq T},
\end{split}
\end{equation}
where $\alpha$, $\beta$, $\theta \in\bR$ are constants and $\xi$ is an $\mathcal F_{T}$-measurable random variable. 
We find a closed form solution to this BSDE by following the steps outlined above. The key is again to reduce the BSDE with random time horizon and a jump into a BSDE with fixed time horizon $T$ and without jumps. The corresponding reduced equation given by
\begin{equation}
\begin{split}
-d \cY_t &= (\alpha \cY_t + \beta (\theta -\cY_t)) dt,\\
\cY_T &= \xi,
\end{split}
\end{equation}
is a linear inhomogeneous ODE. Its solution reads
$$	\cY_t = \left(\xi + \frac{\beta \theta }{\alpha-\beta} \right) e^{(\alpha-\beta)(T-t)} - \frac{\beta \theta }{\alpha-\beta} \, .$$
As a consequence, applying Theorem \ref{thm:transformedBSDE} gives the explicit solution $(Y,U)$ of \eqref{easiestJumpBSDE} as
\begin{align*}
	Y_t &= \left( \left(\xi + \frac{\beta \theta }{\alpha-\beta} \right) e^{(\alpha-\beta)(T-t)} - \frac{\beta \theta }{\alpha-\beta}  \right) 1_{t\leq\tau} + \theta 1_{t > \tau},  \\
	U_t &= \left( \frac{ \alpha \theta }{\alpha-\beta} - \left(\xi + \frac{\beta \theta }{\alpha-\beta} \right) e^{(\alpha-\beta)(T-t)}\right) 1_{t \leq \tau}.
\end{align*}
Notice that it would be hard to solve \eqref{easiestJumpBSDE} directly without the reduction step.

\section{The Burgard-Kjaer PDE and the Feynman-Kac representation}
\label{sec:BKPDE}

In this section we derive the fundamental PDE in the spirit of Burgard and Kjaer, see \cite{bk12}, \cite{g15}. The PDE approach requires that all the rates and dividends introduced above are deterministic. In that sense the approach based on the fundamental BSDE \eqref{jumpFundamentalBSDE}, which requires only that the rates and dividends are adapted to $\sF$, is more general. 

We derive the Burgard-Kjaer PDE starting with the reduced fundamental BSDE \eqref{redFundamentalBSDE}. Namely, we make the following ansatz:
\begin{equation}
{\hat \cV}_t=u(t,S_t),
\end{equation}
where $u=u(t,s)$ is a smooth function $u:\,[0,T]\times\bR^n\to\bR$. Applying Ito's lemma we find that
\begin{equation} \label{itoLemma}
du (t,S_t)= \big(\d_t u(t,S_t) + \cL_t u(t,S_t) \big)dt + (\nabla_s u(t,S_t))^\top \sigma(t,S_t)dW_t.
\end{equation}
Here, the Markovian generator $\cL_t$ is defined by
\begin{equation}\label{markovGen}
\cL_t = \mu(t,s)^{\top} \nabla_s + \frac12 \operatorname{tr}(\sigma(t,s)\sigma(t,s)^{\top} \nabla_s^2),
\end{equation}
and $g$ is the driver defined in \eqref{fundamentalBSDEDriver}. 
As a consequence, we find that 
\begin{equation}
\hat \cZ_t = \sigma(t,S_t)^\top\,\nabla_s u(t,S_t),
\end{equation}
and hence $u$ satisfies the following terminal value problem:
\begin{equation} \label{fundamentalPDE}
\begin{split}
\d_t u + \cL_t u + g(t, s, u,\nabla_s u,\theta^\cB-u,\theta^\cC-u)&=0,\\
 u(T,s) &= \fN(s).
\end{split}
\end{equation}
The derivation above is standard, see e.g. \cite{pr14} for more details. Note that, explicitly equation \eqref{fundamentalPDE} takes the form
\begin{equation*}
\begin{split}
\partial_t u + \frac12\,\tr(\sigma\sigma^{\top} \nabla_s^2 u) - r u & = (\nabla_s u)^\top \diag(\gamma^S-q^S) s
- (r^\cB-r) (\theta^\cB - u)
\\
& \quad 
- (r^\cC-q^\cC) (\theta^\cC - u) 
+ (r^{TC} + r) I^{TC} 
- r^{FC} I^{FC}_t 
- ( r^X + r ) X 
\\
& \quad   
- r^K K 
+ (r^F - r)(\theta^\cB - X + I^{TC})^{-}  ,\\
u(T,s) &= \fN(s).
\end{split}
\end{equation*}
Recall that $\theta^\cB$ and $\theta^\cC$ depend explicitly on the portfolio close out value $M$, as explained in Section \ref{sec:CloseOut}. Specific cases for close out values $M$ are obtained along the lines of the arguments in Section \ref{sec:valueAdj} and coincide with the PDEs derived in \cite{bk12}. In particular, we see from the argument above that the process $\hat\cZ_t$ is essentially the delta of the portfolio. 

Once the solution to \eqref{fundamentalPDE} is established, the solution of the fundamental BSDE can explicitly be written as
\begin{equation} 
\begin{split} 
\hat V_t &= u(t,S_t) \e1_{t <\tau} +  \theta_\tau \e1_{t \geq \tau}, \\
\hat Z_t &= \sigma(t,S_t)^\top\,\nabla_s u(t,S_t) \e1_{t \leq\tau} ,\\
U^\cB_t &= (\theta^\cB_t - u(t,S_t)) \e1_{t \leq \tau} ,\\
U^\cC_t &= (\theta^\cC_t - u(t,S_t)) \e1_{t \leq \tau} .
\end{split}
\end{equation} 
From a practical perspective, this representation of the solution to the fundamental BSDE may be hard to use. Numerical algorithms for solving high dimensional PDEs tend to have poor performance characteristics. We believe that Monte Carlo simulations, discussed in Section \ref{numSec}, offer a more efficient and robust approach.  

Another consequence of \eqref{itoLemma} is the Feynman-Kac representation of the solution to \eqref{fundamentalPDE}. Namely, integrating \eqref{itoLemma} and taking the conditional expectation given the current state of the underlying $S$, we find that
\begin{equation}
\begin{split}
  u (t,s) =& \eE \Big[  \fN(S_T) \\
&  - \int_t^T  g(v, S_v, u(v,S_v),\nabla_s u(v,S_v),\theta_v^\cB-u(v,S_v),\theta_v^\cC-u(v,S_v)) dv \ \big| \ S_t=s\Big].
 \end{split}
 \end{equation}
Actually, the representation above is, in general, not an explicit representation of the solution to the PDE \eqref{fundamentalPDE}. Instead it is an alternative equation for \eqref{fundamentalPDE}, stated as an integral equation.

\section{Valuation adjustments}
\label{sec:valueAdj}

We now proceed to determining the valuation adjustment to the portfolio value accounting for the counterparty credit risk. Let $\cA_t$ denote the difference between the counterparty credit-risky portfolio value $\hat \cV$ and the risk neutral portfolio value $V$, i.e.
\begin{equation}\label{aDef}
\begin{split}
\cA_t &= \hat \cV_t - V_t.
\end{split}
\end{equation}
The total valuation adjustment $A_t$ is then given by $A_t = \cA_t \e1_{t < \tau}$. As before, we let $M$ denote the portfolio close out value at the time of default. Below we consider separately two commonly considered close out conventions, namely $M=V$ and $M=\hat V$. 

\subsection{Close out value $M=V$}

We first consider the case of the close out value $M$ being equal to $V$, namely the risk neutral portfolio value. This is the standard convention widely adopted in the industry, see \cite{bk12}, \cite{g15}. Fortuitously, the corresponding reduced fundamental BSDE turns out to be linear, and thus can be solved in closed form. 

To see this, we observe that the driver $g$ in \eqref{fundamentalBSDEDriver}, together with \eqref{theta}, is given by
\begin{equation} 
\begin{split}
g(t ,S,\hat \cV,& \, \hat \cZ,  \theta^\cB - \hat \cV, \theta^\cC - \hat \cV) =
\\
& 
-\hat \cZ^\top \sigma(t ,S)^{-1} \big( \mu(t ,S) + \diag(\gamma^S-q^S) S \big)
- (r^\cB + r^\cC - q^\cC) \hat \cV  
\\
&  
+ r^K K
- (r^{TC} + r^\cB) I^{TC}
+ (r^{FC} + r^\cC -q^\cC)  I^{FC} 
+ ( r^X + r^\cB +r^\cC - q^\cC ) X 
\\
&  
+(r^\cB - r) \big( (V-X+I^{TC})^+ + R^\cB(V-X+I^{TC})^- \big) \\
&
+(r^\cC - q^\cC) \big( R^\cC(V-X - I^{FC})^+ + (V-X-I^{FC})^- \big) \\
&
+(r^F - r) \big( V - X+I^{TC} \big)^- \ ,
\end{split}
\end{equation}
which is a linear function in $\hat \cV$ and $\hat \cZ$. Note that $V$ is given by \eqref{solV} and is simply an exogenous input to the equation. According to \ref{app:LinBSDE}, the corresponding reduced fundamental BSDE \eqref{redFundamentalBSDE} can be solved explicitly. To streamline the notation we first set
\begin{equation} \label{GDefinition}
\begin{split}
G_t	& = 
r^K_t K_t
- (r_t^{TC} + r_t^\cB) I^{TC}_t 
+ (r_t^{FC} + r^\cC_t -q^\cC_t)  I^{FC}_t 
+ ( r_t^X + r_t^\cB +r^\cC_t - q^\cC_t ) X_t 
\\
& \quad 
+(r_t^\cB - r_t) \big( (V_t-X_t+I^{TC}_t)^+ + R^\cB(V_t-X_t+I^{TC}_t)^- \big) \\
& \quad 
+(r^\cC_t - q^\cC_t) \big( R^\cC(V_t-X_t - I^{FC}_t)^+ + (V_t-X_t-I^{FC}_t)^- \big) \\
& \quad
+(r^F_t - r_t) ( V_t - X_t+I^{TC}_t )^-. 
\end{split}
\end{equation}
We also define the stochastic exponential 
\begin{equation*} 
\begin{split}
\hat \Gamma_{t,s} 
&= \mathcal E \Big( - \int_t^s  \big( \sigma(u,S_u)^{-1} \big(\mu(u,S_u) 
+ (\gamma^S_u-q^S_u) S_u \big) \big)^\top dW_u 
- \int_t^s (r^\cB_u + r_u^\cC - q^\cC_u) du \Big),
\end{split}
\end{equation*}
for $s \geq t$. Observe that as a consequence of \eqref{gammaDef} the following factorization property holds:
\begin{equation} 
\begin{split}
\hat \Gamma_{t,s} 
&=\Gamma_{t,s}\,\exp\Big(-\int_t^s  (r^\cB_u + r_u^\cC - q^\cC_u) du \Big).
\end{split}
\end{equation}
This shows that the impact of the counterparty risk on the time evolution of the fundamental BSDE consists in additional discounting. Using formula \eqref{solLinBSDE} in Appendix \ref{app:LinBSDE}, the solution to the reduced fundamental BSDE can thus be written as 
\begin{equation} \label{solRedBSDEMV}
\begin{split}
\hat \cV_t &=\eE_t\Big[ \hat \Gamma_{t,T} \fN(S_T) + \int_t^T \hat \Gamma_{t,s} G_s ds\Big] \\
&=\eE_t\Big[e^{-\int_t^T  (r^\cB_u + r_u^\cC - q^\cC_u) du} \,\Gamma_{t,T}\, \fN(S_T) + \int_t^T e^{-\int_t^s  (r^\cB_u + r_u^\cC - q^\cC_u) du} \,\Gamma_{t,s}\, G_s ds\Big].
\end{split}
\end{equation}
Note that the formula above is a natural extension of \eqref{solV}. 

Now, using the fact that $x^++x^-=x$, we find after some algebra that the valuation adjustment $\cA$ defined in \eqref{aDef} can explicitly be expressed as
\begin{equation}\label{totValAdjMV}
\begin{split}
\cA_t &= \eE_t\Big[  \big(D_{r^\cB + r^\cC - q^\cC}(t,T)  - D_{r}(t,T) \big) \Gamma_{t,T} \,  \fN(S_T)\Big]\\
&\quad - (1-R^\cC) \eE_t\Big[\int_t^T D_{r^\cB + r^\cC - q^\cC}(t,s)\, \Gamma_{t,s} \, (r_s^\cC - q_s^\cC)(V_s-X_s - I^{FC}_s)^+  ds\Big]\\
&\quad - (1-R^\cB) \eE_t\Big[\int_t^T D_{r^\cB + r^\cC - q^\cC}(t,s)\, \Gamma_{t,s} \, (r_s^\cB - r_s)(V_s-X_s+I^{TC}_s)^- ds\Big]\\
&\quad +\eE_t\Big[\int_t^T D_{r^\cB + r^\cC - q^\cC}(t,s)\, \Gamma_{t,s} \, r^K_s K_s ds\Big]\\
&\quad +\eE_t\Big[\int_t^T D_{r^\cB + r^\cC - q^\cC}(t,s)\, \Gamma_{t,s} \,\big( r_s^{FC} I^{FC}_s - (r_s^{TC} + r_s) I^{TC}_s + ( r_s^X + r_s ) X_s  \big)
 ds\Big]\\
&\quad +\eE_t\Big[\int_t^T D_{r^\cB + r^\cC - q^\cC}(t,s)\, \Gamma_{t,s} \,(r^F_s - r_s) ( V_s - X_s+I^{TC}_s )^- ds\Big],
\end{split}
\end{equation}
where $D_k(t,u)=e^{- \int_t^u k(v)dv}$ is the discount factor over the time interval $[t,u]$ using rate $k$. The first term on the right hand side of \eqref{totValAdjMV} reflects the difference in discounting in the classic Black-Scholes model and counterparty credit risky discounting as discussed above. The remaining terms on the right hand side can be identified as follows:
\begin{equation}
\mbox{CVA}_t= - (1-R^\cC) \eE_t\Big[\int_t^T D_{r^\cB + r^\cC - q^\cC}(t,s)\, \Gamma_{t,s} \, (r_s^\cC - q_s^\cC)(V_s-X_s - I^{FC}_s)^+ ds\Big],
\end{equation}
represents the credit valuation adjustment (CVA),
\begin{equation}
\mbox{DVA}_t= -(1-R^\cB) \eE_t\Big[\int_t^T D_{r^\cB + r^\cC - q^\cC}(t,s)\, \Gamma_{t,s} \, (r_s^\cB - r_s)(V_s-X_s+I^{TC}_s)^- ds\Big],
\end{equation}
represents the debt valuation adjustment (DVA),
\begin{equation}
\mbox{KVA}_t=\eE_t\Big[\int_t^T D_{r^\cB + r^\cC - q^\cC}(t,s)\, \Gamma_{t,s} \, r^K_s K_s ds\Big],
\end{equation}
represents the capital valuation adjustment (KVA),
\begin{equation}
\mbox{MVA}_t=\eE_t\Big[\int_t^T D_{r^\cB + r^\cC - q^\cC}(t,s)\,\Gamma_{t,s} \, \big( r_s^{FC} I^{FC}_s - (r_s^{TC} + r_s) I^{TC}_s + ( r_s^X + r_s ) X_s  \big)ds\Big],
\end{equation}
represents the margin valuation adjustment (MVA), and finally
\begin{equation}
\mbox{FVA}_t=\eE_t\Big[\int_t^T D_{r^\cB + r^\cC - q^\cC}(t,s)\, \Gamma_{t,s} \,(r^F_s - r_s) ( V_s - X_s+I^{TC}_s )^- ds\Big]
\end{equation}
is the funding valuation adjustment (FVA).

We now connect the solution of the reduced fundamental BSDE \eqref{solRedBSDEMV}, to the solution to the fundamental BSDE \eqref{jumpFundamentalBSDE}. Explicitly, we have the following relation between the two solutions:
\begin{equation} \label{solFundamentalLinearBSDE}
\begin{split} 
\hat V_t &= \hat \cV_t \e1_{t <\tau} +  \theta_\tau \e1_{t \geq \tau} ,\\
\hat Z_t &= \hat \cZ_t \e1_{t \leq\tau}, \\
U^\cB_t &= \Big(X_t + I^{FC}_t + R^\cC (V_t-X_t - I^{FC}_t)^+ + (V_t-X_t - I^{FC}_t)^- 
- \hat \cV_t \Big) \e1_{t \leq \tau}, \\
U^\cC_t &= \Big( X_t - I^{TC}_t + (V_t-X_t + I^{TC}_t)^+ + R^\cB (V_t-X_t + I^{TC}_t)^- 
- \hat \cV_t \Big) \e1_{t \leq \tau} .
\end{split}
\end{equation}
The expressions derived above extend the corresponding explicit formulas in \cite{bk12}.

\subsection{Close out value $M=\widehat{V}$}

Choosing the adjusted portfolio value $M=\widehat{V}$ as the close out value, we note that the generator of the reduced fundamental BSDE has the following form:
\begin{equation} \label{vHatBSDEDriver}
\begin{split}
g(t ,S,\hat \cV, & \, \hat \cZ,  \theta^\cB - \hat \cV, \theta^\cC - \hat \cV) =
\\
& 
-\hat \cZ^\top \sigma(t ,S)^{-1} \big( \mu(t ,S) + \diag(\gamma^S-q^S) S \big)  
+ r^K K
\\
& 
- (r^{TC} + r^\cB) I^{TC} 
+ (r^{FC} + r^\cC -q^\cC)  I^{FC} 
+ ( r^X + r^\cB +r^\cC - q^\cC ) X 
\\
&  
- (r^\cB + r^\cC - q^\cC) \hat \cV  
+(r^\cB - r) \big( (\hat \cV-X+I^{TC})^+ + R^\cB(\hat \cV-X+I^{TC})^- \big) \\
&
+(r^\cC - q^\cC) \big( R^\cC(\hat \cV-X - I^{FC})^+ + (\hat \cV-X-I^{FC})^- \big) \\
&
+(r^F - r) \big( \hat \cV - X+I^{TC} \big)^-.
\end{split}
\end{equation}
In contrast to the case of $M=V$, the resulting reduced fundamental BSDE is nonlinear in $\hat \cV$, and an explicit representation to its solution is not available.

Instead, we can construct an approximate solution, assuming that the counterparty credit risk adjustment $\cA$ is small relative to $V$. Specifically, we use the approximation:
\begin{equation}
\begin{split}
\hat{\cV}^+&=(V+\cA)^+\\
& \approx V^+ +\cA\,\e1_{V\geq 0}, \\
\hat{\cV}^-&=(V+\cA)^-\\
&\approx V^- +\cA\,\e1_{V < 0}.
\end{split}
\end{equation}
This approximation is first order accurate in $\cA$. Substituting $\hat{\cV}=V+\cA$ into \eqref{redFundamentalBSDE}, where $V$ satisfies the riskless equation \eqref{BSBSDE}, and using the above approximations, we obtain the following linear BSDE for the adjustment $\cA$,
\begin{equation}
\begin{split} 
- d\cA_t & = ( h^0_t - \fr_t \cA_t + \zeta_t^\top h_t^1 )dt -\zeta_t^\top dW_t,
\\ 
\cA_T &= 0.
\end{split} 
\end{equation}
Here, $h^0 = G + rV$, where $G$ is defined in \eqref{GDefinition}, $\fr$ is the following effective discounting rate:
\begin{equation*} 
\begin{split}
\fr &= r^\cB + r^\cC - q^\cC\\
& \quad -(r^\cB - r) ( (1-R^\cB) \e1_{V-X+I^{TC} \geq 0} + R^\cB) \\
& \quad -(r^\cC - q^\cC) ( (1 - R^\cC)  \e1_{V-X - I^{FC} < 0} + R^\cC ) \\
& \quad -(r^F - r)  \e1_{V-X+I^{TC} < 0}  \ , 
\end{split}
\end{equation*}
and 
$$h^1 = \sigma(t,S)^{-1} \big( \mu(t,S) + \diag(\gamma^S-q^S) S \big).$$

Using the results summarized in Appendix \ref{app:LinBSDE}, we can solve this linear BSDE explicitly. Namely, we define the following stochastic exponential:
\begin{equation*}
\begin{split}
\Gamma^A_{t,s} &= \mathcal E \Big( - \int_t^s \big(\sigma(u,S_u)^{-1} (\mu(u,S_u) 
+ \diag(\gamma^S_u-q^S_u) S_u ) \big)^\top dW_u 
- \int_t^s  \fr_u du \Big) \\
&= D_{\fr}(t,s) \Gamma_{t,s} .
\end{split}
\end{equation*}
Then formula \eqref{solLinBSDE} in Appendix \ref{app:LinBSDE} yields 
\begin{equation} 
\begin{split} 
\cA_t & = \eE_t\Big[\int_t^T\Gamma^A_{t,s} h^0_s ds\Big]\\
&= \eE_t\Big[\int_t^T D_{\fr}(t,s) \Gamma_{t,s} \big(G_s +r_sV_s\big) ds\Big].
\end{split}
\end{equation}
Explicitly, the expression above can be written as
\begin{equation}\label{totValAdjMVhat}
\begin{split}
\cA_t &= \eE_t\Big[\int_t^T D_{\fr}(t,s) \Gamma_{t,s}\,r_sV_s ds\Big]\\
&\quad - (1-R^\cC) \eE_t\Big[\int_t^T D_\fr(t,s)\, \Gamma_{t,s} \, (r_s^\cC - q_s^\cC)(V_s-X_s - I^{FC}_s)^+  ds\Big]\\
&\quad - (1-R^\cB) \eE_t\Big[\int_t^T D_\fr(t,s)\, \Gamma_{t,s} \, (r_s^\cB - r_s)(V_s-X_s+I^{TC}_s)^- ds\Big]\\
&\quad +\eE_t\Big[\int_t^T D_\fr(t,s)\, \Gamma_{t,s} \, r^K_s K_s ds\Big]\\
&\quad +\eE_t\Big[\int_t^T D_\fr(t,s)\, \Gamma_{t,s} \,\big( r_s^{FC} I^{FC}_s - (r_s^{TC} + r_s) I^{TC}_s + ( r_s^X + r_s ) X_s  \big)
ds\Big]\\
&\quad +\eE_t\Big[\int_t^T D_\fr(t,s)\, \Gamma_{t,s} \,(r^F_s - r_s) ( V_s - X_s+I^{TC}_s )^- ds\Big].
\end{split}
\end{equation}
The individual terms in this expression can be interpreted in a fashion similar to the analogous terms in \eqref{totValAdjMV}. Notice that, compare to \eqref{totValAdjMV}, the discount rate $r^\cB+r^\cC-q^\cC$ is replaced with $\fr$.

The first term on the right hand side of \eqref{totValAdjMVhat} is an artifact of the difference in discounting in the classic Black-Scholes model and the counterparty credit risky discounting. The remaining terms on the right hand side have the following interpretation:
\begin{equation}
\mbox{CVA}_t= - (1-R^\cC) \eE_t\Big[\int_t^T D_\fr(t,s)\, \Gamma_{t,s} \, (r_s^\cC - q_s^\cC)(V_s-X_s - I^{FC}_s)^+ ds\Big],
\end{equation}
represents the credit valuation adjustment (CVA),
\begin{equation}
\mbox{DVA}_t= -(1-R^\cB) \eE_t\Big[\int_t^T D_\fr(t,s)\, \Gamma_{t,s} \, (r_s^\cB - r_s)(V_s-X_s+I^{TC}_s)^- ds\Big],
\end{equation}
represents the debt valuation adjustment (DVA),
\begin{equation}
\mbox{KVA}_t=\eE_t\Big[\int_t^T D_\fr(t,s)\, \Gamma_{t,s} \, r^K_s K_s ds\Big],
\end{equation}
represents the capital valuation adjustment (KVA),
\begin{equation}
\mbox{MVA}_t=\eE_t\Big[\int_t^T D_\fr(t,s)\,\Gamma_{t,s} \, \big( r_s^{FC} I^{FC}_s - (r_s^{TC} + r_s) I^{TC}_s + ( r_s^X + r_s ) X_s  \big)ds\Big],
\end{equation}
represents the margin valuation adjustment (MVA), and finally
\begin{equation}
\mbox{FVA}_t=\eE_t\Big[\int_t^T D_\fr(t,s)\, \Gamma_{t,s} \,(r^F_s - r_s) ( V_s - X_s+I^{TC}_s )^- ds\Big]
\end{equation}
is the funding valuation adjustment (FVA).

Finally, we notice that the solution to the fundamental BSDE \eqref{jumpFundamentalBSDE} is related to the riskless portfolio value $V$ via the following approximation:
\begin{equation} \label{solVHatapproxBSDE}
\begin{split} 
\hat V_t & \approx  (V_t + \cA_t) \e1_{t <\tau} +  \theta_\tau \e1_{t \geq \tau} ,\\
\hat Z_t & = \hat \cZ \e1_{t \leq\tau}\\
& \approx (Z_t + \zeta_t) \e1_{t \leq\tau}, \\
U^\cB_t & \approx \Big(X_t + I^{FC}_t + R^\cC (V_t-X_t - I^{FC}_t)^+ + (V_t-X_t - I^{FC}_t)^- + V_t + \cA_t \Big) \e1_{t \leq \tau}, \\
U^\cC_t & \approx \Big( X_t - I^{TC}_t + (V_t-X_t + I^{TC}_t)^+ + R^\cB (V_t-X_t + I^{TC}_t)^- + V_t + \cA_t \Big) \e1_{t \leq \tau} ,
\end{split}
\end{equation}
where $\cA $ is the total adjustment calculated above. We emphasize that, unlike \eqref{solFundamentalLinearBSDE}, the relations above link the solution of the fundamental BSDE to the solution of reduced fundamental BSDE.

\section{Choice of pricing measure and risk factor reduction}
\label{sec:PricingMeasureFactorReduction}

Up until now we have not addressed the issue of choosing a pricing measure $\eP$. The bank's portfolio may consist of a large number of assets. In practice, each of the asset classes is valued under its own martingale measure, which in turn depends on the appropriate choice of numeraire. For example, swaptions are priced under the forward swap measure, while equity options are priced under the rolling bank account measure. From the pricing perspective this approach is fully consistent, the choice of numeraire does not affect model valuations. However, the choice of pricing measure is crucial from the enterprise risk management perspective. The risk of a portfolio composed of various assets is not the sum of the risks of its components, as the dependences between assets may reduce or increase the total risk. It is thus important that the Monte Carlo simulations are carried out under a common pricing measure. There is no natural way of aggregating the different martingale measures into one pricing measure for the entire portfolio. We will not discuss this issue in detail here. In our model specification we simply assume a pricing measure $\eP$ that is not a martingale measure but rather behaves like a historical (aggregate) measure. The choice of $\eP$ is determined by the bank's risk appetite, regulatory requirements, and other factors, see e.g. \cite{kgb15}, \cite{s13}.

Once the aggregate pricing measure $\eP$ has been selected, the next issue is model calibration. There are two categories of variables entering the model: (i) directly observable such as asset prices, interest rates, recovery rates, etc., and (ii) not directly observable variables, which have to be estimated from the market data, such as volatilities, correlations, default intensities, etc.. Generally, for the indirectly observable model inputs, parameters inferred from cross-sectional market prices are associated with various martingale measures, while parameters inferred from historical time series are associated with physical measures. Notice that parameters such as volatilities can be deduced from both types of calculations. Their numerical values will differ depending on whether they are calculated as market implieds or by means of maximum likelihood estimation. However, typically the only practical way of calculating correlations is from historic time series. For default intensities, the CDS market, whenever available, yields risk neutral default probabilities. For less liquid names, without a liquid CDS market, historical default data, such as Moody's DBS bank or various credit ranking models, can be used (see e.g. \cite{mdbs16}). 

Another practical issue is the choice of risk factors. A financial institution is likely to contain thousands of positions in a netting set, each of which subject to market and counterparty risk. From a practical perspective, an analysis of a system with such a large number of risk factors is infeasible. In order to bring the dimensionality of the problem to a manageable size, a methodology of reducing the number of risk factors is required.

In mathematical terms, we are facing the issue of approximating the solution to the following high-dimensional FBSDE:
\begin{equation}\label{genFbsde}
\begin{split}
dS_t&=\mu(t,S_t)dt+\sigma(t,S_t)dW_t,\\
S_0&=s_0,\\
-dY_t&=f(t,S_t,Y_t,Z_t)dt-Z^\top_t dW_t,\\
Y_T&=\xi(S_T).
\end{split}
\end{equation}
Under the usual Lipschitz conditions on the coefficients, standard results guarantee the existence and uniqueness of the solution $(S_t,Y_t, Z_t)$ to this system (see e.g. \cite{pr14}).

A common approach used in practice is principal component analysis (PCA)\footnote{Alternatively, one might use, as is common in equity markets, a factor analysis based risk model such as e.g. BARRA.}. Specifically, the instantaneous covariance of the price process has the spectral decomposition:
\begin{equation}\label{covSpecDec}
\sigma(t,S_t)^{\top}\sigma(t,S_t)=\sum_{1\leq i\leq n}\,\lambda_{i,t}P_{i,t},
\end{equation}
where $\lambda_{i,t}\geq 0$ are the eigenvalues ordered by size, and $P_{i,t}$ are the spectral projections. Generically, each of the eigenvalues is non-degenerate and each of the spectral projections defines a one-dimensional subspace. In general, the eigenvalues and spectral projections are stochastic and depend on the realization of the process $S_t$ and time $t$. The left hand side of \eqref{covSpecDec} is estimated from suitable market data, as discussed above.

Reduction of risk factors is practical if only a small number $F\ll n$ of eigenvalues explain the covariance matrix, i.e.
\begin{equation}\label{appCovSpecDec}
\sigma(t,S_t)^{\top}\sigma(t,S_t)\approx\sum_{1\leq i\leq F}\,\lambda_{i,t}P_{i,t},
\end{equation}
with
\begin{equation*}
\sum_{i> F}\,|\lambda_{i,t}|^2<\varepsilon^2,
\end{equation*}
where $\varepsilon$ is a given tolerance level. We thus consider the projection operator 
\begin{equation*}
P_t=\sum_{1\leq i\leq F}\,P_{i,t}
\end{equation*}
onto the subspace spanned by the eigenvectors corresponding to the first $F$ eigenvalues. The key assumption ensuring that practicality of the factor reduction methodology is that $P_t$ is stable, and so its range persists regardless of market conditions. We can formulate this requirement heuristically as
\begin{equation}
\begin{split}
P_t&\approx P\\
&:=\frac{1}{T}\,\int_0^T\eE[P_t]dt,
\end{split}
\end{equation}
i.e. $P_t$ is approximately equal to its average $P$ over time $T$. We refer to the orthonormal basis in $\bR^n$ defined by this projection as the principal factors. The existence of $P$ is a strong assumption and it is not true in a general mathematical set up. Rather, it is an empirical fact indicating that the financial markets are driven by a relatively small number of persistent economic factors.

We define the following quantity:
\begin{equation}\label{deltaDef}
\begin{split}
\Delta(t) &= \Big(\int_0^t \eE \big[ \tr \big(  (I_n-P) \sigma(u,S_u)^\top \sigma(u,S_u) \big) \big] \,du\Big)^{1/2}\\
&=\Big(\int_0^t \eE \big[ \tr \big(\sigma(u,S_u)^\top \sigma(u,S_u)-P\sigma(u,S_u)^\top \sigma(u,S_u)P \big) \big] \,du\Big)^{1/2}.
\end{split}
\end{equation}
In words, $\Delta(t)$ measures the average discrepancy between the true covariance of the assets and the truncated covariance given by the projection onto the principal factors.

The key fact, established below, is that the price process $S$ of $\fN$ can be, to a good degree of accuracy, explained in terms of the principal risk factors. Specifically, we consider the projection $PW_t$ of the Brownian motion $W_t$ on the principal factors. In general, $P W_t$ is not a Brownian motion. However, we can choose a standard $F$-dimensional Wiener process $\widetilde W_t$ such that
\begin{equation}
PW_t=U\widetilde W_t,
\end{equation}
where $U$ is a constant $n \times F$-matrix with the property that $P=UU^\top$ and $U^\top U=I_F$.

We now consider a system driven by the principal risk factors, more precisely  
\begin{equation}\label{assetAppDyn}
\begin{split}
d\widetilde S_t&=\mu(t,\widetilde S_t)dt+\sigma(t,\widetilde S_t)U d\widetilde W_t,\\
\widetilde S_0&=s_0,
\end{split}
\end{equation}
where the drift and diffusion coefficients are the same as in \eqref{assetDyn}. As this equation can be understood as an SDE with a new diffusion coefficient $\sigma(t,S_t)U$, existence and uniqueness of the solution are obvious. We expect that the solution to this SDE is approximately equal to the true process $S_t$.

We turn these intuitions into a mathematical statement as follows. For an adapted, matrix-valued process $X_t$ we introduce the following semi-norm:
\begin{equation}
\|X_t\|_2=\eE\big[\tr(X_t^\top X_t)\big]^{1/2},
\end{equation}
and the norm
\begin{equation}
\|X\|_{2,\infty}=\sup_{0\leq t\leq T}\|X_t\|.
\end{equation}
Then we have the following theorem.

\begin{theorem}\label{facRedThm}
Assume that $\mu$ and $\sigma$ are Lipschitz continuous:
\begin{equation}
\begin{split}
\|\mu(t,s)-\mu(t,\tilde s)\|_2&\leq L_\mu\|s-\tilde s\|_2,\\
\|\sigma(t,s)-\sigma(t,\tilde s)\|_2&\leq L_\sigma\|s-\tilde s\|_2,
\end{split}
\end{equation}
with constant $L_\mu$ and $L_\sigma$, and satisfy the standard growth conditions:
\begin{equation}
\|\mu(t,s)\|_2 + \|\sigma(t,s)\|_2 \leq G ( 1 + \|s\|_2),
\end{equation}
with $G$ constant. 

Then \eqref{assetAppDyn} has a unique strong solution, and
\begin{equation}
\|\widetilde S-S\|_{2,\infty} \leq
\sqrt 2 \Big(\int_0^T \Delta(u)^2 du\Big)^{1/2} \exp(\gamma T),
\end{equation}
where $\gamma$ is a constant.
\end{theorem}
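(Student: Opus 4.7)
The plan is to derive an SDE for the error process $e_t := \widetilde S_t - S_t$ and estimate its $L^2$-norm via It\^o's formula combined with Gronwall's inequality. The structural feature that makes everything work is that the rank-$F$ orthogonal projection $P = UU^\top$ is constant (with $U^\top U = I_F$), which implies $U\,d\widetilde W_t = P\,dW_t$.

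First I would dispatch the existence and uniqueness claim. The modified diffusion coefficient $\sigma(t,\tilde s)U$ is Lipschitz in $\tilde s$ with constant $L_\sigma$ (since $U$ is constant and $\|U\|_{\mathrm{op}}=1$) and satisfies linear growth with constant $G$. Standard SDE theory then gives a unique strong solution to \eqref{assetAppDyn}. Then, subtracting the two SDEs and using $U\,d\widetilde W_t = P\,dW_t$, the error $e_t$ satisfies
\begin{equation*}
de_t = [\mu(t,\widetilde S_t)-\mu(t,S_t)]\,dt + \bigl\{[\sigma(t,\widetilde S_t)-\sigma(t,S_t)]P - \sigma(t,S_t)(I_n-P)\bigr\}dW_t.
\end{equation*}
The diffusion has been decomposed into a Lipschitz piece projected onto the principal subspace and a residual piece supported on its orthogonal complement, which is exactly what produces the $\Delta$ term.

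Next I would apply It\^o's formula to $\|e_t\|^2$ and take expectation. The stochastic integral is a martingale by the linear-growth assumption and standard a priori $L^2$ bounds on $S,\widetilde S$. The crucial algebraic identity is the orthogonality $P(I_n-P)=0$, which kills the cross term in the quadratic variation. Using $\tr(APA^\top)\leq \tr(AA^\top)$ for the projection $P$, together with Cauchy-Schwarz and the Lipschitz assumptions on $\mu$ and $\sigma$, I obtain an estimate of the form
\begin{equation*}
\|e_t\|_2^2 \;\leq\; C\int_0^t \|e_u\|_2^2\,du + \int_0^t \eE\bigl[\tr((I_n-P)\sigma(u,S_u)^\top\sigma(u,S_u))\bigr]\,du,
\end{equation*}
with $C=C(L_\mu,L_\sigma,T)$. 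The second term on the right-hand side is, by definition, $\Delta(t)^2$.

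The final step is Gronwall's inequality, applied in integral form with the non-decreasing inhomogeneous source $\Delta(t)^2$: this yields a bound of the type $\|e_t\|_2^2 \leq 2\int_0^t \Delta(u)^2\,du\cdot e^{2\gamma t}$ for a suitable $\gamma$ depending on $L_\mu,L_\sigma,T$, and taking the supremum over $t\in[0,T]$ delivers the stated inequality. The main obstacle, though not deep, is the careful bookkeeping of the It\^o cross terms: one has to use $P^2=P$ and the orthogonality $P(I_n-P)=0$ to ensure that the two pieces of the diffusion decouple in the quadratic variation, so that the error remains of the order of the single small parameter $\Delta$ rather than being polluted by mixed terms. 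All other steps are routine applications of Lipschitz estimates and Gronwall.
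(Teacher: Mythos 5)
Your proposal is correct and follows essentially the same route as the paper: both split the diffusion discrepancy into a Lipschitz part on the range of $P$ plus the residual $\sigma(u,S_u)(I_n-P)$ piece whose size is exactly $\Delta$, and both close with Gr\"onwall's inequality. The only difference is cosmetic — you apply It\^o's formula to $\|e_t\|^2$ and use $P(I_n-P)=0$ to decouple the quadratic variation, whereas the paper works with the integral form, the triangle inequality and It\^o's isometry, and then squares the resulting scalar estimate before invoking Gr\"onwall.
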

The proof of this theorem is presented in Appendix \ref{app:facRedProof}.
The theorem above says that the price process $\widetilde S_t$ driven by the truncated risk factors indeed approximates the true price process $S_t$. The tightness of the approximation is given by $\sqrt{\int_0^T \Delta(u)^2 du}$, and it may degrade exponentially fast in the time horizon $T$.

We now consider the backward part of the system of equations \eqref{genFbsde} driven by the principal risk factors: 
\begin{equation} \label{TildeBSDE}
\begin{split}
- d \widetilde Y_t &= f(t,\widetilde S_t, \widetilde Y_t, \widetilde Z_t ) dt - \widetilde Z_t^\top U d\widetilde W_t,  \\
\widetilde Y_T &= \xi(	\widetilde S_T) .
\end{split}
\end{equation}
Note that, unlike \eqref{genFbsde}, the driving process in \eqref{TildeBSDE} is not a standard Brownian motion anymore but the martingale $U \widetilde{W_t}$. There are several theoretical and practical aspects to be considered when working with this equation, which we will address in \cite{lr16}. For instance, existence of a solution $(\widetilde Y, \widetilde Z)$ to the above equation can be shown, however the solution is not unique. To see this, recall that the process $\widetilde{Z}$ represents the delta hedging strategy, also compare \eqref{ZUU}. Reducing the risk factors leads to an incomplete market as one can not fully hedge one's position anymore. We may, for example, choose $\widetilde{Z}$ to be a minimum variance strategy. In order to prove uniqueness of the solution, we have to introduce another process, compare \cite{km97}.

In order to measure the discrepancy between the exact BSDE and its approximation, we find it convenient to introduce the following (semi-)norms for adapted, vector-valued processes:
\begin{equation*}
\begin{split} 
\|X_t\|_\beta &= \eE\big[e^{\beta t }\tr(X_t^\top X_t)\big]^{1/2},\\
\|X\|_{\beta,2} &= \Big( \int_0^T \|X_u\|^2_\beta \, du \Big)^{1/2},\\
\|X\|_{\beta,\infty} &= \sup_{0 \leq t \leq T} \|X_t\|_\beta .
\end{split}
\end{equation*}
The following theorem shows that the solution to equation \eqref{TildeBSDE} approximates the solution of the backward part of system \eqref{genFbsde}.

\begin{theorem}\label{facRedThmBackward}
Assume that the terminal value $\xi$ and the driver $f$ are Lipschitz continuous:
\begin{equation}
\begin{split}
|\xi(\tilde s)-\xi(s)| & \leq K_\xi \|\tilde s - s \|_2 \\
|f(t,\tilde s, \tilde y, \tilde z) - f(t, s,y,z)| &\leq K_f \big(\|\tilde s - s \|_2+\|\tilde y - y \|_2 + \|\tilde z - z \|_2 \big) . 
\end{split}
\end{equation}
Then there exist constants $c_1,c_2,c_3>0$, depending on the time horizon $T$, such that the following inequalities hold:
\begin{equation} \label{approx}
\begin{split}
\|\widetilde Y - Y \|_{\beta,\infty} & \leq c_1 \|\widetilde S - S \|_{2,\infty} ,\\
\|\widetilde Z_t - Z_t \|_{\beta,2} & \leq c_2 \|\widetilde S - S \|_{2,\infty} ,\\
\int_0^T e^{\beta u}\eE\big[Z^\top_u (I_n-P)  Z_u \big] du  & \leq c_3 \|\widetilde S - S \|_{2,\infty} .
\end{split}
\end{equation}
\end{theorem}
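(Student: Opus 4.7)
The plan is to adapt the classical stability estimate for Lipschitz BSDEs to the reduced-factor setting, the one new ingredient being a careful orthogonal decomposition that isolates the residual component of $Z$ lying outside the span of the principal factors.

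First I would put both BSDEs on the same probability space. Since $U^\top U = I_F$, the process $\widetilde W_t := U^\top W_t$ is a standard $F$-dimensional Brownian motion and $U\widetilde W_t = UU^\top W_t = P W_t$, so $U\,d\widetilde W_t = P\,dW_t$. Exploiting the non-uniqueness noted after \eqref{TildeBSDE}, I would fix the minimum-variance representative with $\widetilde Z_t = P\widetilde Z_t$; the reduced equation can then be rewritten on the full filtration as $-d\widetilde Y_t = f(t,\widetilde S_t,\widetilde Y_t,\widetilde Z_t)dt - \widetilde Z_t^\top P\,dW_t$. Setting $\Delta Y = \widetilde Y - Y$, $\Delta Z = \widetilde Z - Z$, $\Delta S = \widetilde S - S$, and $\Delta f_u = f(u,\widetilde S_u,\widetilde Y_u,\widetilde Z_u) - f(u,S_u,Y_u,Z_u)$, the driver of $\Delta Y$ has $dt$-coefficient $-\Delta f$ and $dW$-coefficient $P\widetilde Z - Z = P\Delta Z - (I_n-P)Z$.

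Next I would apply It\^o's formula to $e^{\beta t}|\Delta Y_t|^2$, take expectations (the local martingale part is a true martingale under the standard $L^2$ a priori bounds for Lipschitz BSDEs), and rearrange to get
\begin{equation*}
e^{\beta t}\eE|\Delta Y_t|^2 + \eE\!\int_t^T\! e^{\beta u}\bigl(|P\Delta Z_u|^2 + |(I_n-P)Z_u|^2\bigr)du = e^{\beta T}\eE|\xi(\widetilde S_T)-\xi(S_T)|^2 + \eE\!\int_t^T\! e^{\beta u}\bigl(2\Delta Y_u \Delta f_u - \beta|\Delta Y_u|^2\bigr)du,
\end{equation*}
where I used $P(I_n-P)=0$ to split the quadratic variation term. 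Applying the Lipschitz bound on $f$ and Young's inequality $2ab \leq \gamma^{-1}a^2 + \gamma b^2$ gives $|2\Delta Y\,\Delta f| \leq \gamma|\Delta Y|^2 + 3K_f^2\gamma^{-1}(|\Delta S|^2 + |\Delta Y|^2 + |\Delta Z|^2)$, and since $\widetilde Z = P\widetilde Z$ one has $|\Delta Z|^2 = |P\Delta Z|^2 + |(I_n-P)Z|^2$. Choosing $\gamma$ so that $3K_f^2/\gamma < 1$ and then $\beta$ so large that $\beta > \gamma + 3K_f^2\gamma^{-1}$, the $|\Delta Y|^2$, $|P\Delta Z|^2$ and $|(I_n-P)Z|^2$ terms on the right can all be absorbed into the left side, leaving a bound of the form
\begin{equation*}
e^{\beta t}\eE|\Delta Y_t|^2 + c\,\eE\!\int_t^T\! e^{\beta u}\bigl(|\Delta Z_u|^2 + |(I_n-P)Z_u|^2\bigr)du \leq C\Bigl(K_\xi^2 e^{\beta T}\eE|\Delta S_T|^2 + \eE\!\int_t^T\! e^{\beta u}|\Delta S_u|^2 du\Bigr).
\end{equation*}
The right-hand side is dominated by $C'(T,\beta,K_\xi,K_f)\,\|\widetilde S - S\|_{2,\infty}^2$ by definition of the norm, from which all three conclusions of the theorem follow: the first by taking the sup over $t$ of the left-hand $\eE|\Delta Y_t|^2$ term (note that $\|\cdot\|_{\beta,\infty}$ is the sup of a deterministic function of $t$, so no BDG step is needed), the second from the $|\Delta Z|^2$ integral, and the third directly from the $|(I_n-P)Z|^2$ integral.

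The main obstacle is structural rather than technical: recognizing that the orthogonal decomposition $P\widetilde Z - Z = P\Delta Z - (I_n-P)Z$ produces two non-negative contributions to the quadratic variation, one of which is precisely the residual term $\eE\!\int e^{\beta u}Z^\top(I_n-P)Z\,du$ appearing in the third inequality. This term has no counterpart in the standard BSDE stability estimate and is the quantitative price of factor reduction; it cannot be absorbed away but must instead be kept on the left side of the inequality. The remaining care is bookkeeping around the choice of $\gamma$ and $\beta$ so that all terms end up on the correct side, using only the hypotheses stated.
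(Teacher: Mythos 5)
Your proposal is correct and follows essentially the same route as the paper: It\^o's formula applied to $e^{\beta t}|\widetilde Y_t-Y_t|^2$, the orthogonal splitting of the quadratic variation into $(\widetilde Z-Z)^\top P(\widetilde Z-Z)$ and $Z^\top(I_n-P)Z$, Young's inequality on the driver term, and the choice of $\lambda$ (your $\gamma$) and then $\beta$ to absorb the remaining terms. Your explicit coupling $\widetilde W=U^\top W$ and the remark that $\widetilde Z=P\widetilde Z$ justifies replacing $\|P(\widetilde Z-Z)\|^2$ by $\|\widetilde Z-Z\|^2$ are welcome clarifications of steps the paper leaves implicit, but the argument is the same.
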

The proof of this theorem is presented in Appendix \ref{app:facRedProof}. 
Consequently, the solution $(Y,Z)$ can be approximated by the processes $(\widetilde Y, \widetilde Z)$ driven by the principal risk factors. Moreover, the third of the inequalities in \eqref{approx} shows that the residual portion of $Z$ is small.

\section{Numerical results}
\label{numSec}

In this section we discuss a general numerical framework for solving continuous FBSDEs using Monte Carlo methods. Such equations include the reduced fundamental BSDE discussed above. We propose an algorithm for finding the counterparty credit-risky value $\hat V$ as an application. The method is then illustrated in a simple example. 

\subsection{Discretizing FBSDEs}
\label{subsec:DiscretizingFBSDE}

We briefly review a method for discretization of the forward backward system
\begin{equation}\label{fbSystem}
\begin{split}
dS_t &= \mu(t,S_t) dt + \sigma(t,S_t) dW_t,\\
S_0 &=s_0 ,\\
-dY_t &= f(t,S_t,Y_t,Z_t)dt -Z_t^\top dW_t, \\
Y_T &=\xi(S_T) .
\end{split}
\end{equation}
This method is classic and has been originally proposed by Bouchaud and Touzi in \cite{bt04}. 

For the forward process $S$, we apply a standard discretization scheme (see e.g. \cite{kp92}) such as Euler's or Milstein's scheme (for the latter, assuming suitable integrability conditions). Let $\pi=\{ t_0=0<t_1< \ldots < t_m = T\}$ denote a regular time grid, where $\Delta_i=t_{i+1}-t_{i}$. In particular, for the Euler scheme, the approximation takes the form of the following discretized forward process
\begin{equation*}
\begin{split}
S^\pi_0 &= s_0 , \\
S_{i+1}:=S^\pi_{t_{i+1}} &= S^\pi_{t_{i}} + \mu(t_{i},S^\pi_{t_{i}}) \Delta_i + \sigma(t_{i},S^\pi_{t_{i}}) \Delta W_i,
\end{split}
\end{equation*}
where $i \in \{0, \ldots,m-1\}$ and $\Delta W_{{i}} = W_{t_{i+1}}-W_{t_{i}}$. 

In order to approximate the backward part of the FBSDE \eqref{fbSystem}, we set $S_i := S^\pi_{t_{i}}$, $Y_i := Y^\pi_{t_{i}}$ and $Z_i=Z_{t_i}^\pi$. This leads to the following system:
\begin{equation}
\begin{split}
\label{BSDEEuler}
Y_i &= Y_{i+1} + f(S_i,Y_{i}, Z_i) \Delta_i -  Z_i^\top \Delta W_{i}  .
\end{split}
\end{equation}
Starting with the terminal condition
$$ Y_m = \xi(S_m),$$
we proceed with finding $Y_i$ and $Z_i$ for all $i = m-1, \ldots, 0$. Note that the $Y_i$'s in \eqref{BSDEEuler} are not adapted and depend on $Z_i$. These two problems can be solved by taking conditional expectations which leads to 
\begin{equation*}
\begin{split}
Y_{i} &= \eE_i \left[ Y_{i}  \right] \\
& = \eE_i \left[ Y_{i+1} \right] +  f(S_i,Y_{i}, Z_i) \Delta_i,
\end{split}
\end{equation*}
where we have used the notation $\eE _{i}[ \, \cdot \, ]=\eE _{t_i}[ \, \cdot \, ]$.
This implicit scheme can transformed into an explicit scheme by 
\begin{equation*}
\begin{split}
Y_{i} & = \eE_i \left[ Y_{i+1} +  f(S_i,Y_{i+1}, Z_i) \Delta_i  \right].
\end{split}
\end{equation*}
In order to determine $Z_i$, we multiply \eqref{BSDEEuler} by an increment $\Delta W_{i}$ and take conditional expectations. This yields
\begin{equation*}
\begin{split}
0 = \eE_i \left[ Y_i \Delta W_{i}  \right]
&= \eE_i \left[ Y_{i+1} \Delta W_{i} \right] - Z_{i} \Delta_i,
\end{split}
\end{equation*}
and hence we obtain the following expression for $Z_{i}$:
\begin{equation*}
Z_{i}
= \frac{1}{ \Delta_i} \, \eE_i \left[ Y_{i+1} \Delta W_{i}   \right] .
\end{equation*}

We are thus led to the following discrete time scheme for solving the backward part of system \eqref{fbSystem}:
\begin{equation}\label{discrFbSystem}
\begin{split}
Y_m &= \xi(S_m),\\
Z_{i}&= \frac{1}{ \Delta_i}\,\eE_i \big[ Y_{i+1} \Delta W_{i} \big] ,\\
Y_{i}&= \eE_i \big[ Y_{i+1} +  f(S_i,Y_{i+1}, Z_i) \Delta_i  \big],\\
\end{split}
\end{equation}
for $i=m-1, \ldots, 0$. Note that simulating this system requires numerical estimation of the conditional expected values $\eE_i[\,\cdot \,]$. We discuss this issue in the following section. 

\subsection{Conditional expectations via a Longstaff-Schwartz regression}
\label{subsec:HermitePolynomials}

A practical and powerful method of computing the conditional expected values in \eqref{discrFbSystem} is the Longstaff-Schwartz regression method originally developed for pricing American options \cite{ls01} (see also \cite{b12}). We use a variant of this method that involves the Hermite polynomials. This choice is natural as expressions involving conditional expectations of Hermite polynomials of Gaussian random variables lead to convenient closed form expressions. 

Let $\mathrm{He}_k\ofx$, $k=0,1,\ldots$, denote the $k$-th normalized Hermite polynomial corresponding to the standard Gaussian measure $d\mu(x)=(2\pi)^{-1/2}\,e^{-x^2/2} dx$. For a multi-index $\underline{k}=(k_1,\ldots,k_n)$, where each $k_a$ is a nonnegative integer, we define
\begin{equation}
\mathrm{He}_{\underline{k}}(x)=\prod_{a=1}^n\,\mathrm{He}_{k_a}(x_a).
\end{equation}
These functions form an orthonormal basis for the Hilbert space $L^2 \big(\bR^n, \mu_n\big)$, where $\mu_n$ is the standard Gaussian measure in $n$ dimensions, $d\mu_n(x)=(2\pi)^{-n/2}\,e^{-x^\top x /2} d^n x$.

The key property of $\mathrm{He}_{\underline{k}}(x)$ is the following addition formula for $\chi \in [0,1]$ and $w,x \in \bR^n$:
\begin{equation} \label{HeAdd}
\mathrm{He}_{\underline{k}}(\sqrt{\chi} \, w+\sqrt{1-\chi}\, x)
=\sum_{\underline{0}\leq\underline{j}\leq\underline{k}}\, {\underline{k}\choose\underline{j}}^{1/2}\; \chi^{|\underline{j}|/2}(1-\chi)^{|\underline{k-j}|/2}\mathrm{He}_{\underline{j}}(w)
\mathrm{He}_{\underline{k-j}}(x).
\end{equation}
Consequently, integrating over $x$ with respect to $\mu_n$ yields the following conditioning rules:
\begin{equation}\label{condRule}
\begin{split}
\eE\big[\mathrm{He}_{\underline{k}}(\sqrt{\chi} \, w+\sqrt{1-\chi}\,x)\,|\,w\big] & =\chi^{|\underline{k}|/2}\mathrm{He}_{\underline{k}}(w), \\
\eE\big[\mathrm{He}_{\underline{k}}(\sqrt{\chi} \, w+\sqrt{1-\chi}\,x)\,x_a \,|\,w\big] & = \chi^{|\underline{k}-1|/2} (1-\chi)^{1/2} \, \frac{\d \mathrm{He}_{\underline{k}} (w)}{\d w_a}\,.\\
\end{split}
\end{equation}
Here, $w, x$ are independent $n$-dimensional standard normal random variables.
The latter rule is found using the addition formula \eqref{HeAdd} and orthonormality of Hermite polynomials with respect to the standard Gaussian measure.
We shall use these rules in order to estimate the conditional expected values in \eqref{discrFbSystem}.

We set $W_{t_i}=\sqrt{t_i}\,w_i$, for $i=1,\ldots, m$, where $w_i$ is an $n$-dimensional standard normal random variable. We notice that
\begin{equation}\label{orthDecomp}
w_{i+1}=\sqrt{\chi_i}\,w_i+\sqrt{1-\chi_i}\,X_i,
\end{equation}
where $\chi_i=t_i/t_{i+1}$, and where $X_i$ is standard normal and independent of $w_i$. In the following, we shall use this decomposition in conjunction with \eqref{condRule}.

Now, we assume the following linear architecture:
\begin{equation}\label{hermExp}
Y_{i+1}=\sum_{\underline{k}:\,|\underline{k}|\leq K}\,g_{\underline{k},i+1}\,\mathrm{He}_{\underline{k}}(w_{i+1}),
\end{equation}
where $K$ is the cutoff value of the order of the Hermite polynomials. This is simply a truncated expansion of the random variable $Y_{i+1}$ in terms of the orthonormal basis $\mathrm{He}_{\underline{k}}(w_{i+1})$. The values of the Fourier coefficients are estimated by means of ordinary least square regression. Then, as a consequence of the conditioning rule \eqref{condRule},
\begin{equation}\label{condExp}
\eE_i[Y_{i+1}]=\sum_{\underline{k}:\,|\underline{k}|\leq K}\,g_{\underline{k},i+1}\,\chi_i^{|\underline{k}|/2}\,\mathrm{He}_{\underline{k}}(w_i).
\end{equation}
In other words, conditioning $Y_{i+1}$ on $w_{i}$ is equivalent to multiplying its Fourier coefficients $g_{\underline{k}}$ by the factor $\chi_i^{|\underline{k}|/2}$.

In practice, the formula for $Z_i$ given by \eqref{discrFbSystem} is hard to use. Instead, we find an explicit expression using the Hermite architecture, which was performant in our experiments.
\begin{proposition}
The following identity holds:
\begin{equation}\label{mallAbl}
\begin{split}
Z_i &= \frac{\d}{\d W_i}\,\eE_i[Y_{i+1}]\\
&=\frac{1}{\sqrt{t_i}}\sum_{k\leq K} g_{k,i+1} \chi_{i}^{k/2}\, k \mathrm{He}_{{k-1}} (w_i).
\end{split}
\end{equation}
\end{proposition}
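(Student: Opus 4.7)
The plan is to establish both equalities in \eqref{mallAbl} directly from the Hermite ansatz \eqref{hermExp} and the conditioning rules \eqref{condRule}. The starting point is the expression $Z_i = \Delta_i^{-1}\eE_i[Y_{i+1}\Delta W_i]$ from \eqref{discrFbSystem}, combined with the truncated expansion $Y_{i+1} = \sum_{k\leq K} g_{k,i+1}\Her_k(w_{i+1})$. I would first rewrite the Brownian increment $\Delta W_i = W_{t_{i+1}}-W_{t_i}$ in terms of the fresh Gaussian $X_i$ appearing in the orthogonal decomposition \eqref{orthDecomp}. Using $W_{t_i}=\sqrt{t_i}\,w_i$ together with $w_{i+1}=\sqrt{\chi_i}\,w_i+\sqrt{1-\chi_i}\,X_i$, where $\chi_i=t_i/t_{i+1}$, a direct arithmetic check gives $\Delta W_i=\sqrt{\Delta_i}\,X_i$, whence $Z_i = \Delta_i^{-1/2}\,\eE_i[Y_{i+1} X_i]$.

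Next I would apply the second identity in \eqref{condRule} term by term to $\eE_i[\Her_k(w_{i+1})\,X_i]$. Together with the classical identity $\Her_k'(w)=k\,\Her_{k-1}(w)$, this produces $\chi_i^{(k-1)/2}(1-\chi_i)^{1/2}\,k\,\Her_{k-1}(w_i)$. Substituting $\sqrt{1-\chi_i}=\sqrt{\Delta_i/t_{i+1}}$ and regrouping the powers of $\chi_i$, the prefactor collapses as
\[
\Delta_i^{-1/2}\,\sqrt{1-\chi_i}\,\chi_i^{(k-1)/2}
= \frac{1}{\sqrt{t_{i+1}}}\,\chi_i^{(k-1)/2}
= \frac{1}{\sqrt{t_i}}\,\chi_i^{k/2}.
\]
Summing over $k\leq K$ then yields the explicit right-hand side of \eqref{mallAbl}.

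For the $\partial/\partial W_i$ representation, I would start instead from \eqref{condExp}, which gives $\eE_i[Y_{i+1}]=\sum_k g_{k,i+1}\chi_i^{k/2}\,\Her_k(w_i)$, and observe that $w_i=W_{t_i}/\sqrt{t_i}$, so that $\partial/\partial W_{t_i}=t_i^{-1/2}\partial/\partial w_i$. Differentiating term by term and again invoking $\Her_k'(w)=k\,\Her_{k-1}(w)$ reproduces exactly the same series. The only real bookkeeping issue is keeping the exponents of $\chi_i$ straight between the two routes (the shift from $(k-1)/2$ to $k/2$ is absorbed into the $\sqrt{t_{i+1}}\to\sqrt{t_i}$ conversion). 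Beyond this there is no genuine obstacle, since the computation is essentially a consequence of the martingale/addition structure of Hermite polynomials under the standard Gaussian measure already recorded in \eqref{HeAdd}--\eqref{condRule}.
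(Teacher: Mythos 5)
Your proposal is correct and follows essentially the same route as the paper: both rewrite $\Delta W_i=\sqrt{\Delta_i}\,X_i$ via the orthogonal decomposition \eqref{orthDecomp}, apply the second conditioning rule in \eqref{condRule} term by term to the Hermite expansion of $Y_{i+1}$, and identify the result with the $w_i$-derivative of \eqref{condExp} after the same $\chi_i^{(k-1)/2}\to\chi_i^{k/2}$ bookkeeping. The only cosmetic remark is that the identity $\mathrm{He}_k'=k\,\mathrm{He}_{k-1}$ you invoke holds for the unnormalized probabilists' Hermite polynomials (for the normalized ones it is $\sqrt{k}\,\mathrm{He}_{k-1}$), but this matches the convention implicit in the proposition's statement, so your argument is consistent with the paper's.
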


\begin{proof}
It is sufficient to establish \eqref{mallAbl} in the one-dimensional case. Using \eqref{orthDecomp} and \eqref{HeAdd} we readily find that
\begin{equation*}
\begin{split}
\eE_i[\mathrm{He}_k(w_{i+1}) \Delta W_i]&= \sqrt{\Delta_i}\eE_i[\mathrm{He}_k(\sqrt{\chi_i}\,w_i+\sqrt{1-\chi_i}\,X_i) X_i]\\
&=  \frac{\Delta_i}{\sqrt{t_i}} \,\chi_{i}^{k/2} \, \frac{\d \mathrm{He}_{{k}} (w_i)}{\d w_i}\,,\\
\end{split}	
\end{equation*} 
where we have also used the second of the identities \eqref{condRule}. Consequently, using \eqref{discrFbSystem}, we find that
\begin{equation*}
\begin{split}
Z_i &= \frac{1}{\Delta_i} E_i \big[ \sum_{k\leq K} g_{k,i+1} \mathrm{He}_k(w_{i+1}) \Delta W_i \big]\\
&=  \sum_{k\leq K} g_{k,i+1} \chi_{i}^{k/2}\, \frac{\d \mathrm{He}_{{k}} (w_i)}{\d W_i}. \\
\end{split}
\end{equation*}
Comparing this with \eqref{condExp}, we see that \eqref{mallAbl} holds.
\end{proof}

Now that we have found a practical representation for $Z_i$, we proceed calculating $Y_i$ in \eqref{discrFbSystem}. To this end, we repeat the calculations in \eqref{hermExp} and \eqref{condExp} with $Y_{i+1}$ replaced by $Y_{i+1} + f(S_i,Y_i,Z_i)\Delta_i$. 

\subsection{Numerical solution for $\hat V$}

In order to solve the fundamental BSDE numerically, we proceed as follows. First we select the number of risk factors as in Section \ref{sec:PricingMeasureFactorReduction}. We then generate $N$ paths of the multi-factor Brownian motion required to simulate the dynamics of the underlying portfolio. Using the spectral decomposition approach to generating the Brownian paths, a practical choice could be $N=10,000$. 

Next, we simulate the asset price process $S$ by solving the forward equation of \eqref{fbSystem}. We use the price process $S$ as an input to find the value $V$, given by \eqref{solV}, of the netting set subject to no counterparty credit risk. 

Another key input into the model are the default intensities $\lambda^\cB$ and $\lambda^\cC$. Choosing $\lambda^\cB$ and $\lambda^\cC$ deterministic is the simplest possible and commonly selected option. However, this does not allow one to model wrong/right way risk \cite{cd03}, \cite{greg15}, \cite{gg12}. On the other hand, modeling stochastic default rates requires a stochastic dynamic. A standard approach consists in modeling $\lambda^\cB$ and $\lambda^\cC$ as diffusion processes. The Brownian drivers of these diffusions are appropriately correlated with the Brownian motions driving the underlying asset $S$. The sign of the magnitude of these correlations allows one to quantify the impact of wrong way risk on the counterparty credit. Solving the diffusions for $\lambda^\cB$ and $\lambda^\cC$ and applying the acceptance rejection method then generates the default times $\tau^\cC$ and $\tau^\cB$.

Next, the reduced fundamental BSDE $\hat \cV$ \eqref{redFundamentalBSDE} is solved. Since the reduced fundamental BSDE is of the form \eqref{fbSystem}, the numerical methodology discussed in Sections \ref{subsec:DiscretizingFBSDE} and \ref{subsec:HermitePolynomials} can be applied directly. For practical purposes we can choose $K$, the maximum order of Hermite polynomials, to be a small integer $2 \leq K \leq 4$. This choice offers a reasonable balance between accuracy and performance of the computation.

Finally, we find the counterparty credit risky portfolio value $\hat V$ as a result of the preceding computations using formula \eqref{solFundamentalBSDE}.

\subsection{Numerical illustrations}

In this section we illustrate the numerical method discussed above by applying it to a simple BSDE with a known explicit solution. A more thorough analysis of the above method as applied to the fundamental BSDE will be presented in a separate publication, see \cite{llr16}. 

Specifically, consider the following nonlinear BSDE: 
\begin{equation} \label{BSDEnonlinear}
\begin{split}
-dY_t&= \left(\alpha Y_t + \beta|Z_t| + \gamma^\top U_t  - \gamma^\top \theta (\alpha - \gamma^\top \mathbb 1) (T-t) \right)dt - Z_t^\top dW_t - U_t^\top dJ_t ,\\
Y_{\tau \wedge T} &= e^{a^\top W_T} 1_{\tau>T} + (\theta^1 \e1_{\tau= \tau^1} + \ldots + \theta^n \e1_{\tau= \tau^n}) 1_{\tau \leq T},
\end{split}
\end{equation}
with a counting process $J_t = (\e1_{\tau^1 \leq t} , \ldots , \e1_{\tau^n \leq t})^\top$, the first default time $\tau = \tau^1 \wedge \ldots \wedge \tau^n$, and a constant real-valued vector $\theta = (\theta^1, \ldots, \theta^n)^\top$. Moreover $\alpha, \ \beta \in \mathbb R$ and $a, \gamma \in \mathbb R^n$. Note that this BSDE has a random time horizon $\tau$ at which a jump occurs. 

As discussed in Section \ref{transformation}, the BSDE can be reduced to one without the jump and with a fixed time horizon. According to Theorem \ref{thm:transformedBSDE} the reduced BSDE is given by
\begin{equation} \label{ReducedBSDEExample}
\begin{split}
-d \cY_t&= (\alpha \cY_t + \beta|\cZ_t| + \gamma^\top (\theta - \cY_t  ) - \gamma^\top \theta (\alpha - \gamma^\top \mathbb 1) (T-t) )dt- \cZ_t^\top dW_t ,\\
\cY_T &= e^{a^\top W_T} .
\end{split}
\end{equation}
This BSDE has an explicit solution, which reads
\begin{equation}\label{explictSol}
(\cY_t, \cZ_t) = \big(M_t , a (M_t - \gamma^\top \theta (T-t))\big),
\end{equation}
where
\begin{equation}
\begin{split}
M_t&= \exp \Big( \Big(\frac12 a^\top a + \beta |a| + \alpha -\gamma^\top \mathbb 1 \Big)(T-t) + a^\top W_t \Big) + \gamma^\top \theta (T-t),\\
\end{split}
\end{equation}
for all $t \in [0,T]$. The solution $(Y,Z,U)$ of \eqref{BSDEnonlinear} is now obtained from solution to the reduced BSDE as 
\begin{equation}
\begin{split}
Y_t&= M_t 1_{t < \tau } + (\theta^1 \e1_{\tau= \tau^1} + \ldots + \theta^n \e1_{\tau= \tau^n}) 1_{t \geq \tau},\\
Z_t & = a (M_t - \gamma^\top \theta (T-t)) 1_{t \leq \tau} ,\\
U_t &= (\theta - M_t) 1_{t \leq \tau}.
\end{split}
\end{equation}

We will now construct a numerical solution to the reduced BSDE \eqref{ReducedBSDEExample}. More precisely, we compare the numerical solution to its explicit solution \eqref{explictSol} in the case of $n=1$. We assume the time horizon of $T=1$, and choose the following values of the parameters:
\begin{equation*}
\begin{split}
a &= -1.2 ,  \\
\alpha &= 0.5, \\
\beta &= 0.1, \\
\gamma &= 2, \\
\theta &= 1. 
\end{split}
\end{equation*} 
We divide the time interval into $m=250$ subintervals and generate $N=20,000$ Monte Carlo paths. For estimating the conditional expected values we choose the Hermite architecture \eqref{hermExp}, \eqref{mallAbl} with $K=4$.

Figure \ref{traj} shows representative Monte Carlo trajectories simulating $Y$ and $Z$. Here, the black lines are the paths of the exact solution \eqref{explictSol}, while the red lines are the numerical approximations calculated according to the algorithm above. Notice that the approximate path of $Y$ is very close to the exact trajectory. However, the paths representing $Z$ differ more. Apparently the numerical solution to the $Z$ process of a BSDE converges slower than the numerical solution to the $Y$ process. 

\begin{center}
	FIGURE 1
\end{center}

\begin{figure}[h]
\centering
\includegraphics[scale=0.60]{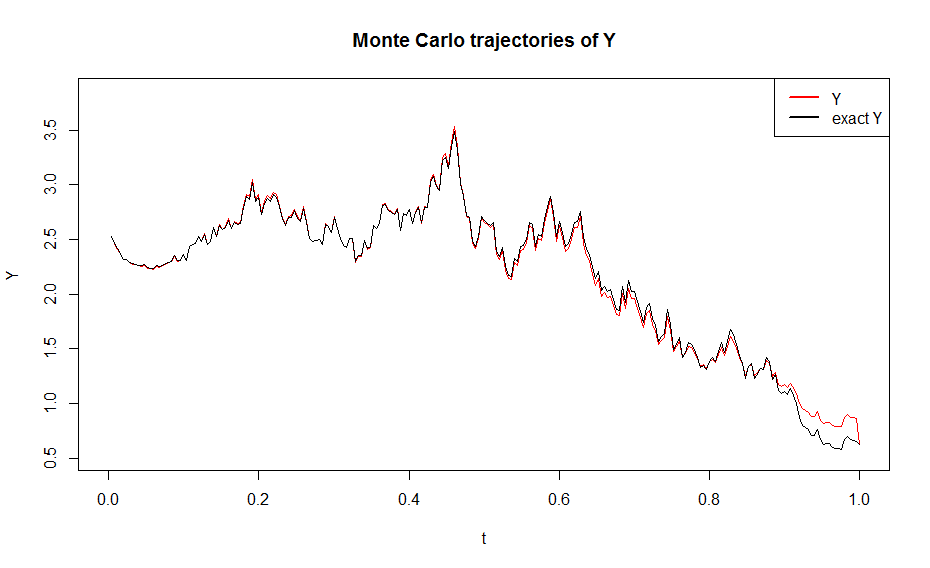}
\includegraphics[scale=0.60]{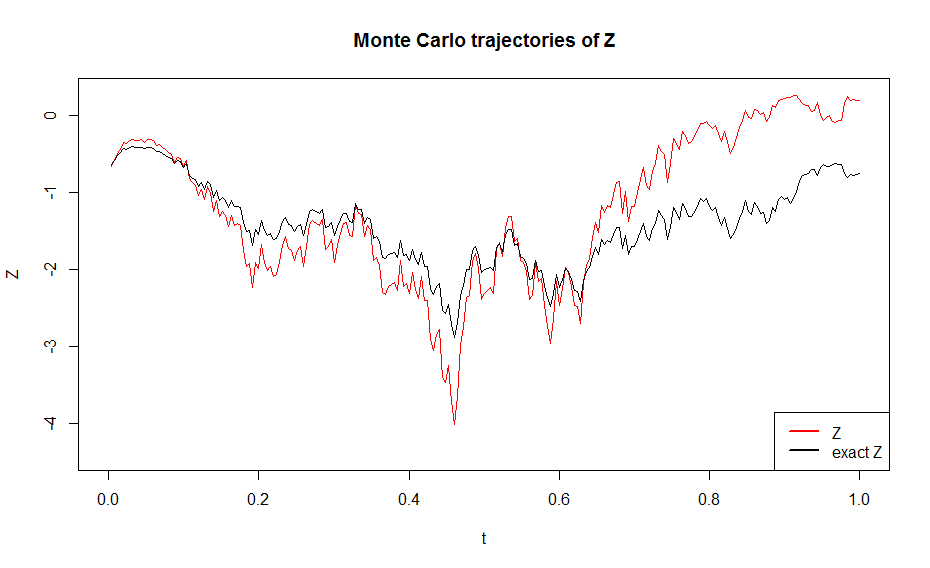}
\caption{Sample trajectories comparison of exact and numerical solution to \eqref{ReducedBSDEExample}. }
\label{traj}
\end{figure}

On the other hand, the expected values of both $Y$ and $Z$ are close approximations of the exact solution of the BSDE. This is shown in Figure \ref{ExpValues}. 

\begin{center}
	FIGURE 2
\end{center}

\begin{figure}[h]
\centering
\includegraphics[scale=0.65]{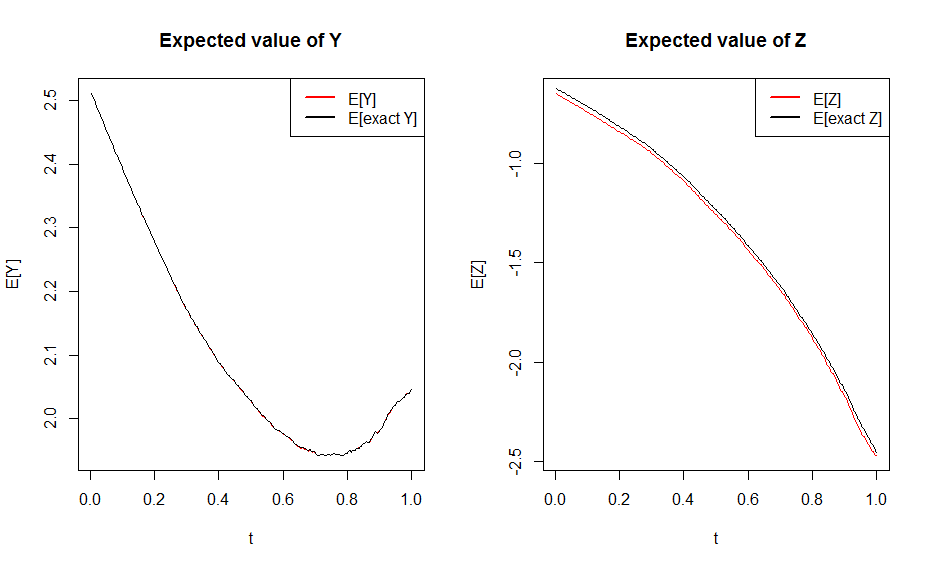}
\caption{Expected value comparison of exact and numerical solution to \eqref{ReducedBSDEExample}. }
\label{ExpValues}
\end{figure}

Finally, Figure \ref{RelError} shows the relative error of the expected values of $Y$ and $Z$ versus the expected values of the exact solution.

\begin{center}
	FIGURE 3
\end{center}

\begin{figure}[h]
\centering
\includegraphics[scale=0.47]{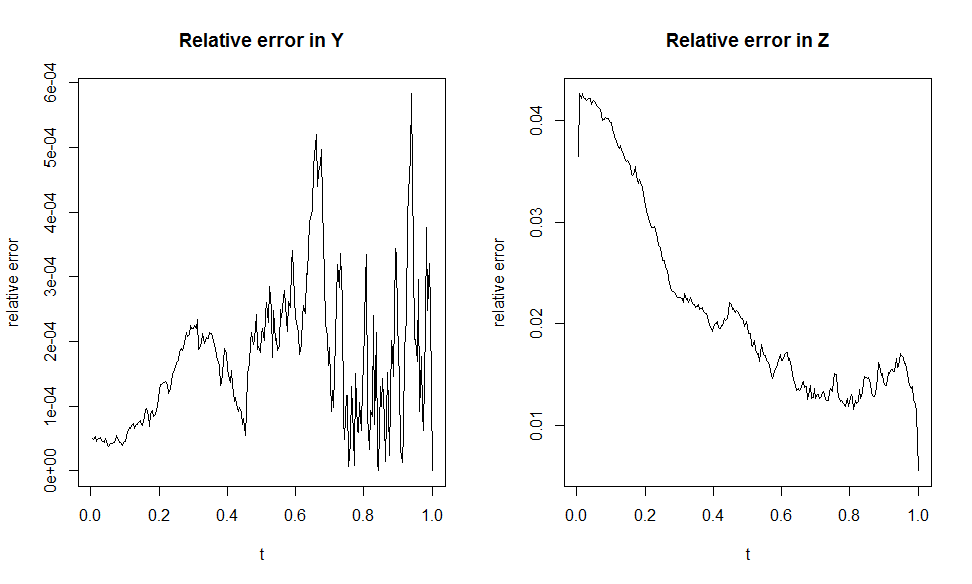}
\caption{Relative error of expected values of exact and numerical solution to \eqref{ReducedBSDEExample}.}
\label{RelError}
\end{figure}

\appendix
\section{Technical results and proofs}

In this section we present general results on SDEs and BSDEs we have used throughout the paper. First we are interested how the fundamental BSDE with jumps can be transformed into a reduced fundamental BSDE. 

\subsection{Transforming jump BSDEs into Brownian BSDEs} 
\label{transformation}

The fundamental BSDE can more generally be expressed as an equation
\begin{equation} 	\label{jumpBSDE}
\begin{split} 
- d Y_t &= f(t,Y_t,Z_t,U_t) dt - Z_t^\top dW_t -  U_t^\top dJ_t , \quad t \in [0, \tau \wedge T], \\
Y_{\tau \wedge T} &= \e1_{\tau >T} \xi + \e1_{\tau\leq T} (\theta^1_\tau \e1_{\tau= \tau^1} + \ldots + \theta^m_\tau \e1_{\tau= \tau^m}) ,
\end{split}
\end{equation}
driven by an $n$-dimensional Brownian motion $W$ and a counting process $J_t = (\e1_{\tau^1 \leq t} , \ldots , \e1_{\tau^m \leq t})^\top$. The solution to this BSDE is the set $(Y,Z,U)$ of adapted stochastic processes that satisfies \eqref{jumpBSDE}. The driver $f: \mathbb R_+ \times \mathbb R \times \mathbb R^n \times \mathbb R^m \to \mathbb R$ is a given deterministic function and $\tau = \tau^1 \wedge \ldots \wedge \tau^m$ denotes the first default time. The BSDE has a possible random time horizon, more precisely its terminal value depends on whether a default event happens before the fixed time horizon $T$. In that case the BSDE stops at the random time $\tau$ at an entry of the adapted stochastic process $\theta = (\theta^1, \ldots, \theta^m)^\top$. Otherwise the BSDE carries on to the final time $T$ with an $\mathcal F_T$- measurable random variable $\xi$ as the final value. Although jump BSDEs are rather complex to handle, we are in the particular situation that only ever one jump occurs and it happens at the very end of the BSDE. This is what we can use to transform the above random horizon jump BSDE into an equation without jumps and with a fixed terminal time, i.e.
\begin{equation} 	\label{brownianBSDE}
\begin{split} 
- d \cY_t &= f(t,\cY_t,\cZ_t,\theta_t-\cY_t) dt - \cZ^\top_t dW_t , \quad t \in [0,T], \\
\cY_T &= \xi.
\end{split}
\end{equation}
The following result is a generalization of Theorem 4.3 in \cite{kln13}, giving us the possibility to express the solution to the jump BSDE in terms of the solution of a continuous BSDE. 

\begin{theorem} \label{thm:transformedBSDE}
	If the pair of adapted stochastic processes $(\cY,\cZ)$ solves \eqref{brownianBSDE}, then the solution $(Y,Z,U)$ of \eqref{jumpBSDE} is given by
	\begin{equation} \label{sol}
	\begin{split} 
	Y_t &= \cY_t \e1_{t <\tau} +  (\theta^1_\tau \e1_{\tau= \tau^1} + \ldots + \theta^m_\tau \e1_{\tau= \tau^m})  \e1_{t \geq \tau} \ , \\
	Z_t &= \cZ_t \e1_{t \leq\tau} \ , \\
	U_t &= (\theta_t - \cY_t) \e1_{t \leq \tau}  .
	\end{split}
	\end{equation}
\end{theorem}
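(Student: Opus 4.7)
My plan is to verify directly that the triple $(Y, Z, U)$ defined by \eqref{sol} satisfies both the terminal condition and the dynamics of the jump BSDE \eqref{jumpBSDE}; uniqueness for the jump BSDE is a standard consequence of the Lipschitz assumptions on $f$, so it suffices to check that the candidate is a solution. The terminal condition is the easy step: on $\{\tau > T\}$, $\tau \wedge T = T$ and $\e1_{T < \tau} = 1$, so $Y_T = \cY_T = \xi$ by the terminal condition of \eqref{brownianBSDE}; on $\{\tau \leq T\}$, $\tau \wedge T = \tau$ and $\e1_{\tau \geq \tau} = 1$, yielding $Y_\tau = \sum_i \theta^i_\tau \e1_{\tau = \tau^i}$, as required.

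For the dynamics, I would rewrite the ansatz as $Y_t = \cY_t (1 - H_t) + \Theta H_t$ on $[0, \tau \wedge T]$, where $H_t := \e1_{\tau \leq t}$ is the counting process associated with the first-default time and $\Theta := \sum_i \theta^i_\tau \e1_{\tau=\tau^i}$ is the $\sG_\tau$-measurable close-out payment. Since $\cY$ is a continuous semimartingale and $H$ is a pure-jump process of finite variation, the integration-by-parts formula collapses to
\begin{equation*}
dY_t = (1 - H_{t-}) \, d\cY_t + (\Theta - \cY_{t-}) \, dH_t,
\end{equation*}
with no quadratic-covariation term. On $[0, \tau \wedge T]$ one has $H_{t-} = 0$, so the first piece is simply $d\cY_t$, which by \eqref{brownianBSDE} equals $-f(t, \cY_t, \cZ_t, \theta_t - \cY_t) \, dt + \cZ_t^\top dW_t$.

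The key step for the jump term is that the Cox assumption on $\tau^1, \ldots, \tau^m$ forces them to be pairwise distinct almost surely, so that on $\{t \leq \tau\}$ one has $dH_t = \sum_i \e1_{\tau = \tau^i} \, dJ^i_t$. Combined with continuity of both $\cY$ and $\theta$, which gives $U_{\tau-} = \theta_\tau - \cY_\tau$, this identifies
\begin{equation*}
(\Theta - \cY_{t-}) \, dH_t = \sum_i (\theta^i_\tau - \cY_\tau) \e1_{\tau = \tau^i} \, dJ^i_t = U_{t-}^\top \, dJ_t.
\end{equation*}
Substituting $\cY_t = Y_t$, $\cZ_t = Z_t$, and $\theta_t - \cY_t = U_t$ (all valid on $[0, \tau \wedge T]$) into the drift and martingale parts then reproduces exactly \eqref{jumpBSDE}.

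The main obstacle I anticipate is the careful bookkeeping around the single jump at $\tau$: confirming that the Cox hypothesis truly rules out simultaneous defaults so that $\Theta$ is unambiguously $\theta^i_\tau$ on $\{\tau = \tau^i\}$; verifying that replacing the driver argument $U_t$ by its left-limit value $\theta_t - \cY_t$ on $[0, \tau \wedge T]$ does not affect the $dt$-integral (which is true because the two agree outside the null set $\{\tau\}$); and ensuring the proposed $(Y, Z, U)$ lies in the appropriate space of adapted processes so that uniqueness results for jump BSDEs identify it as the unique solution of \eqref{jumpBSDE}. Once these measurability and cadlag issues are dispatched, the remainder of the argument is mechanical integration by parts.
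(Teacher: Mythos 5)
Your verification is correct, and it reaches the same conclusion as the paper by a recognizably different route. The paper argues in integral form and splits into three cases according to the position of $\tau$ relative to $[t,T]$ (no default before $T$; default in $(t,T]$; default at or before $t$), checking on each event that the candidate satisfies \eqref{jumpBSDE} by direct substitution and by computing $\int_t^\tau U_s^\top dJ_s = U_\tau^\top(J_\tau - J_{\tau-}) = \Theta - \cY_\tau$. You instead write $Y_t = \cY_t(1-H_t) + \Theta H_t$ with $H_t = \e1_{\tau\leq t}$ and apply the product formula once, which unifies the three cases into a single differential identity; the price is that you must be slightly more careful about left limits ($H_{t-}=0$ on $[0,\tau]$, $U_{\tau-}=\theta_\tau-\cY_\tau$ by continuity of $\cY$ and $\theta$) and about the absence of a quadratic covariation term, all of which you handle. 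Both arguments rely on the same implicit hypothesis that the $\tau^i$ are a.s. pairwise distinct, so that exactly one component of $J$ jumps at $\tau$ and $U_{\tau-}^\top\Delta J_\tau$ collapses to $\Theta-\cY_\tau$; the paper uses this silently in its second case, whereas you flag it explicitly, which is a small improvement. You also note that uniqueness must be supplied separately by standard Lipschitz arguments — the paper's proof, like yours, only establishes that the candidate is \emph{a} solution — so your treatment is at least as complete on that point.
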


\begin{proof}
We consider three cases.

In the first case no default happens before the terminal time, i.e. $\tau > T$. On $\{\tau > T \}$ we have from \eqref{sol} that $Y_t = \cY_t$, $Z_t = \cZ_t$ and $U_t = \theta_t - \cY_t$ for all $t \in [0,T]$. As $(\cY,\cZ)$ solves \eqref{brownianBSDE}, we have 
\begin{equation*}
\begin{split} 
- d Y_t &= f(t, Y_t,Z_t,U_t) dt - Z^\top_t dW_t , \qquad t \in [0,T], \\
Y_T &= \xi = \e1_{\tau >T} \xi + \e1_{\tau\leq T} (\theta^1_\tau \e1_{\tau= \tau^1} + \ldots + \theta^m_\tau \e1_{\tau= \tau^m}).
\end{split}
\end{equation*}
on $\{\tau > T \}$. Additionally we know $\int_{t \wedge \tau}^{T \wedge \tau} U_s^\top dJ_s = 0$ on $\{\tau > T \}$ and hence we derive \eqref{jumpBSDE}.

In the second case a default happens between now and $T$, more precisely we look at $\{\tau \in  (t,T] \} =\{\tau >  t \} \cap \{\tau \leq T \}  $. Then again from \eqref{sol} we have on $\{\tau \in  (t,T] \}$ that $Y_s = \cY_s$, $Z_s = \cZ_s$, $U_s = \theta_s - \cY_s$ for all $s < \tau$. Using that $(\cY,\cZ)$ solves \eqref{brownianBSDE}, we obtain
\begin{equation*}
\begin{split} 
Y_t 
&= 
\cY_\tau + \int_t^\tau f(s, Y_s, Z_s, U_s) ds -  \int_t^\tau  Z^\top_t dW_t  \\
&=
(\theta^1_\tau \e1_{\tau= \tau^1} + \ldots + \theta^m_\tau \e1_{\tau= \tau^m})  
+ \int_t^\tau f(s, Y_s, Z_s, U_s) ds 
-  \int_t^\tau  Z^\top_t dW_t 
\\
& \qquad
- \big(  (\theta^1_\tau \e1_{\tau= \tau^1} + \ldots + \theta^m_\tau \e1_{\tau= \tau^m}) -\cY_\tau\big)
\end{split}
\end{equation*}		
for $t \in [0,\tau]$. The definition of $U$ from \eqref{sol} gives 
\begin{equation*}
\begin{split}
\int_t^\tau U_s^\top dJ_s &= U_\tau^\top (J_\tau-J_{\tau^-})\\
& =(\theta_\tau - \cY_\tau)^\top (J_\tau-J_{\tau^-})\\
&=  (\theta^1_\tau \e1_{\tau= \tau^1} + \ldots + \theta^m_\tau \e1_{\tau= \tau^m}) - \cY_\tau ,
\end{split}
\end{equation*}
meaning we have \eqref{jumpBSDE}.

The last case considers the situation when the default happens before or at time $t$, i.e. $\tau \leq t$. Again from \eqref{sol} we have $Y_t =  (\theta^1_\tau \e1_{\tau= \tau^1} + \ldots + \theta^m_\tau \e1_{\tau= \tau^n}) $ and thus on $\{ \tau \leq t\}$ we get
\begin{equation*}
\begin{split} 
Y_t 
&=
(\theta^1_\tau \e1_{\tau= \tau^1} + \ldots + \theta^m_\tau \e1_{\tau= \tau^m}) \\
&= 
\e1_{\tau >T} \xi + \e1_{\tau\leq T} (\theta^1_\tau \e1_{\tau= \tau^1} + \ldots + \theta^m_\tau \e1_{\tau= \tau^m})
+ \int_{t\wedge \tau }^{T\wedge \tau } f(s, Y_s, Z_s, U_s) ds \\
& \qquad
- \int_{t\wedge \tau }^{T\wedge \tau }  Z^\top_s dW_s  
- \int_{t\wedge \tau }^{T\wedge \tau } U_s^\top dJ_s ,
\end{split}
\end{equation*}
which is equation \eqref{jumpBSDE} in integral form.		 
\end{proof}

\subsection{Linear BSDEs} \label{app:LinBSDE}

Continuous linear BSDEs are equations with a driver that is linear in $Y$ and $Z$, meaning we consider equations of the type 
\begin{equation} \label{linBSDE}
\begin{split}
-dY_t&= (A_t + B_t Y_t + C_t^\top Z_t )dt - Z_t^\top dW_t, \quad t \in[0,T],\\
Y_T&=\xi , 
\end{split}
\end{equation}
with $n$-dimensional Brownian motion $W$, $\mathcal F_T$ measurable random terminal value $\xi$ and $A$, $B$, $C$ being adapted stochastic processes. The solution of this equation is any pair of adapted processes $(Y,Z)$ that satisfies \eqref{linBSDE}. These are some of the few BSDEs for which at least the first part of the solution $Y$ can be found explicitly. From \cite[Proposition 2.2]{kpq97} we have 
\begin{equation} \label{solLinBSDE}
\begin{split}
Y_t&= \eE_t \Big[ \xi \Gamma_{t,T} + \int_t^T A_s \Gamma_{t,s}  ds\Big] \\
\end{split}
\end{equation}
where 
\begin{equation}
\Gamma_{t,s}=\cE\Big(\int_t^s B_u du + C_u^\top dW_u\Big) \ .
\end{equation}
Here, $\cE (X)$ denotes the stochastic exponential of a stochastic process $X$. 

\subsection{Factor reduction}
\label{app:facRedProof}

In this section we prove Theorem \ref{facRedThm}.

\begin{proof} We begin by rewriting the SDEs for $S$ and $\widetilde S$ in the integral form:
\begin{equation*}
\begin{split}
S_t&=s_0+\int_0^t\mu(u,S_u)du+\int_0^t\sigma(u,S_u)dW_u,\\
\widetilde S_t&=s_0+\int_0^t\mu(u,\widetilde S_u)du+\int_0^t\sigma(u,\widetilde S_u)Ud\widetilde W_u.
\end{split}
\end{equation*}
Consequently, their difference is given by
\begin{equation*}
\begin{split}
\widetilde S_t-S_t=&\int_0^t\big(\mu(u,\widetilde S_u)-\mu(u, S_u)\big)du\\
&+\int_0^t\big(\sigma(u,\widetilde S_u)-\sigma(u, S_u)\big) U d \widetilde W_u+\int_0^t\sigma(u, S_u)\big(d  W_u-U d\widetilde W_u\big),
\end{split}
\end{equation*}
and thus, by means of Ito's isometry,
\begin{equation*}
\begin{split}
\|\widetilde S_t-S_t\|_2\leq&\|\int_0^t\big(\mu(u,\widetilde S_u)-\mu(u,S_u)\big)du\|_2\\
&+\|\int_0^t\big(\sigma(u,\widetilde S_u)-\sigma(u,S_u)\big) U d \widetilde W_u\|_2+\|\int_0^t\sigma(u,S_u)\big(d W_u- U d \widetilde W_u\big)\|_2\\
\leq&t^{1/2}\Big(\int_0^t\|\mu(u,\widetilde S_u)-\mu(u,S_u)\|_2^2\, du\Big)^{1/2}
+\Big(\int_0^t\| (\sigma(u,\widetilde S_u)-\sigma(u,S_u)) U\|_2^2\,du\Big)^{1/2}\\
&+ \Big(\int_0^t \eE \big[ \tr \big(  (I_n-P) \sigma(u,S_u)^\top \sigma(u,S_u) (I_n-P) \big) \big] \,du\Big)^{1/2}.
\end{split}
\end{equation*}
Note that 
\begin{equation*}
\begin{split}
\int_0^t \eE \big[ \tr \big(  (I_n-P) \sigma(u,S_u)^\top& \sigma(u,S_u) (I_n-P) \big) \big] \,du \\
& = \int_0^t \eE \big[ \tr \big(  \sigma(u,S_u)^\top \sigma(u,S_u) - P \sigma(u,S_u)^\top \sigma(u,S_u) P \big) \big] \,du \\
& = \int_0^t \Delta(u)^2 du,
\end{split}
\end{equation*}
where $\Delta(t)$ is defined by \eqref{deltaDef}. Using Lipschitz continuity, this yields
\begin{equation}\label{sstildeEst}
\|\widetilde S_t-S_t\|_2
\leq
C \Big(\int_0^t\|\widetilde S_u-S_u\|_2^2\, du\Big)^{1/2}
+\Big(\int_0^t \Delta(u)^2 du\Big)^{1/2},
\end{equation}
where $C$ is a constant, explicitly given as $C = L_\mu T^{1/2} + L_\sigma \|U\|$.

We shall now invoke classic Gr\"onwall's inequality: if $\varphi(t)$ is a nonnegative continuous function with
\begin{equation*}
\varphi(t)\leq\alpha(t)+\beta\int_0^t \varphi(s)ds,
\end{equation*}
where $\alpha(t)$ is a non-decreasing function and $\beta>0$, then
\begin{equation*}
\varphi(t)\leq\alpha(t)\exp(\beta t).
\end{equation*}
Squaring \eqref{sstildeEst}, and applying the inequality above to $\varphi(t)=\|\widetilde S_t-S_t\|_2^2$, we obtain
\begin{equation*}
\|\widetilde S_t-S_t\|_2
\leq
\sqrt{2} \Big(\int_0^t \Delta(u)^2 du\Big)^{1/2} \exp(\gamma t),
\end{equation*}
where we have set $\gamma=C^2$. Taking the supremum over $0\leq t\leq T$ yields the claim.
\end{proof}	 

We now turn to the proof of Theorem \ref{facRedThmBackward}.

\begin{proof}
The difference between \eqref{TildeBSDE} and the backward part of the system \eqref{genFbsde} is given by
\begin{equation*}
\begin{split}
\widetilde Y_t - Y_t 
&= \xi (	\widetilde S_T) - \xi (	S_T) + \int_t^T \big( f(u,\widetilde S_u, \widetilde Y_u, \widetilde Z_u ) - f(u,S_u, Y_u, Z_u ) \big) du \\
& \quad - \int_t^T ( \widetilde Z_u - Z_u)^\top U d\widetilde W_u - \int_t^T Z_u^\top ( U d\widetilde W_u - d W_u).
\end{split}
\end{equation*}	
In the following we adapt the arguments used to prove the existence of a solution to a BSDE (see \cite{kpq97}, \cite{pr14}). Applying Ito's formula to the process $e^{\beta t}(\widetilde Y_t - Y_t)^2$, where the constant $\beta > 0$ will be chosen later, yields
\begin{equation*}
\begin{split}
& e^{\beta t}(\widetilde Y_t - Y_t )^2\\
&= 
e^{\beta T}(\xi (\widetilde S_T) - \xi (S_T))^2 
+ 2 \int_t^T e^{\beta u}(\widetilde Y_u - Y_u)\big( f(u,\widetilde S_u, \widetilde Y_u, \widetilde Z_u ) - f(u,S_u, Y_u, Z_u ) \big) du \\
& \quad 
- 2 \int_t^T e^{\beta u}(\widetilde Y_u - Y_u) ( \widetilde Z_u^\top - Z_u^\top U)  d\widetilde W_u 
- 2 \int_t^T e^{\beta u}(\widetilde Y_u - Y_u) Z_u^\top (Ud\widetilde W_u - d W_u) \\
& \quad 
- \int_t^T e^{\beta u} (\widetilde Z_u - Z_u)^\top P (\widetilde Z_u - Z_u) du -  \int_t^T e^{\beta u}  Z^\top_u (I_n-P) Z_u du \\
& \quad 
- \beta \int_t^T e^{\beta u}(\widetilde Y_u - Y_u)^2 du.
\end{split}
\end{equation*}	
Taking expectations on both sides of this equation leads to the following identity:
\begin{equation*}
\begin{split}
& \|\widetilde Y_t - Y_t \|_\beta^2 + \beta \int_t^T \|\widetilde Y_u - Y_u\|_\beta^2 du + \int_t^T \|P (\widetilde Z_u - Z_u)\|_\beta^2 \,du  + \int_t^T e^{\beta u}\eE\big[Z^\top_u (I_n-P)  Z_u \big] du \\
& = 
\|\xi (\widetilde S_T) - \xi (S_T)\|_\beta^2 + 2 \int_t^T \eE\big[e^{\beta u} (\widetilde Y_u - Y_u)\big( f(u,\widetilde S_u, \widetilde Y_u, \widetilde Z_u ) - f(u,S_u, Y_u, Z_u ) \big)\big] du.\\
\end{split}
\end{equation*}
Using Lipschitz continuity of the terminal condition $\xi$ and driver $f$, we obtain that
\begin{equation*}
\begin{split}
& \|\widetilde Y_t - Y_t \|_\beta^2 + \beta \int_t^T \|\widetilde Y_u - Y_u\|_\beta^2 du + \int_t^T \|\widetilde Z_u - Z_u\|_\beta^2 \,du + \int_t^T e^{\beta u}\eE\big[ Z^\top_u (I_n-P) Z_u \big] du \\
& \leq K_\xi \|\widetilde S_T - S_T\|_\beta^2 + 2 K_f\int_t^T \eE\big[ e^{\beta u} |\widetilde Y_u - Y_u|\big( |\widetilde S_u - S_u| + |\widetilde Y_u - Y_u| + |\widetilde Z_u - Z_u| \big)\big] du.
\end{split}
\end{equation*}	
Using the elementary inequality 
$$2ab \leq a^2\lambda^2+\frac{b^2}{\lambda^2}\,,$$
where $\lambda >0$ is a constant, we find that
\begin{equation*}
\begin{split}
2|\widetilde Y_u - Y_u|\big( |\widetilde S_u - S_u| &+ |\widetilde Y_u - Y_u| + |\widetilde Z_u - Z_u| \big)\\
&\leq (3+\lambda^2)|\widetilde Y_u - Y_u|^2+|\widetilde S_u - S_u|^2+\frac{1}{\lambda^2}\,|\widetilde Z_u - Z_u|^2.
\end{split}
\end{equation*}
We thus arrive at the following key inequality:
\begin{equation*}
\begin{split}
 \|\widetilde Y_t - Y_t \|_\beta^2 &+ (\beta-K_f(3+\lambda^2) )\int_t^T \|\widetilde Y_u - Y_u\|_\beta^2\, du\\
&+ (1-K_f/\lambda^2)\int_t^T \|\widetilde Z_u - Z_u\|_\beta^2 \,du + \int_t^T e^{\beta u}\eE\big[ Z^\top_u (I_n-P) Z_u \big] du\\
& \leq K_\xi \|\widetilde S_T - S_T\|_\beta^2 + K_f\int_t^T \|\widetilde S_u - S_u\|^2_\beta  du.
\end{split}
\end{equation*}
Now, we choose $\lambda$ sufficiently large so that $1-K_f/\lambda^2>0$, and then subsequently we choose $\beta$ so that $\beta-K_f(3+\lambda^2)>0$.
\end{proof}

%


\begin{thebibliography}{99}
\bibitem{ads16} Andersen, L., Duffie, D., and Song, Y.: Funding Value Adjustments, working paper (2016).
\bibitem{bbp97} Barles, G., Buckdahn, R., Pardoux, E.: \textit{Backward stochastic differential equations and integral partial differential equations}, Stochastic and Stochastic Reports 60, 57–83 (1997).
\bibitem{b15} Basel Committee on Banking Supervision \textit{Margin requirements for non-centrally cleared derivatives} (2015).
\bibitem{bIII15} Basel Committee on Banking Supervision \textit{Basel III: The stardardised approach for measuring counterparty credit risk exposures: Frequently asked questions} (2015).
\bibitem{b12} Bertsekas, D. P.: \textit{Dynamic Programming and Optimal Control}, Vol. 2, Athena Scientific (2012).
\bibitem{bcs16} Bichuch, M., Capponi, A., Sturm, S.: \textit{Arbitrage-Free XVA}, \href{http://arxiv.org/pdf/1608.02690v1.pdf}{ArXiv:1608.02690} (2016).
\bibitem{bt04} Bouchard B., Touzi N.: \textit{Discrete Time Approximation and Monte-Carlo Simulation of Backward Stochastic Differential Equations}, Stochastic Processes and their applications, Vol 111, 2, 175-206 (2004).
\bibitem{bmp13} Brigo D., Morini M., and Pallavicini A.: \textit{Counterparty Credit Risk, Collateral and Funding: with Pricing Cases for all Asset Classes}, Wiley, ISBN: 978-0-470-74846-6 (2013).
\bibitem{bk12} Burgard C., and Kjaer M.: \textit{PDE representation of derivatives with bilateral counterparty risk and funding costs}, The Journal of Credit Risk, 7(3), 1-19 (2011).
\bibitem{cd03} Canabarro, E., and Duffie, D.: \textit{Measuring and marking counterparty risk}, Asset/Liability Management for Financial Institutions, Institutional Investor Books (2003).
\bibitem{cbb14} Cr\'epey S., Bielecki T., and Brigo D.: \emph{Counterparty Risk and Funding: A Tale of Two Puzzles}, Chapman and Hall/CRC Financial Mathematics Series, ISBN: 978-1466516458 (2014)
\bibitem{e98} Evans, L. C.: \textit{Partial Differential Equations}, AMS (1998).
\bibitem{gl15} Geiss, C., Labart, C.: \textit{Simulation of BSDEs with jumps by Wiener Chaos Expansion}, 	\href{http://arxiv.org/pdf/1502.05649v1.pdf}{ArXiv:1502.05649} (2015).
\bibitem{gg12} Ghamami, S., and Goldberg, L. R.: \textit{Stochastic Intensity Models of Wrong Way Risk: Wrong Way CVA Need Not Exceed Independent CVA}, working paper (2012).
\bibitem{g15} Green, A.: \textit{XVA: Credit, Funding and Capital Valuation Adjustments}, Wiley (2015).
\bibitem{gk14} Green, A., Kenyon, C.: \textit{MVA: Initial Margin Valuation Adjustment by Replication and Regression}, \href{https://arxiv.org/pdf/1405.0508.pdf}{ArXiv:1405.0508 } (2014).
\bibitem{gkd14} Green, A., Kenyon, C., and Dennis, C.: \textit{KVA: Capital Valuation Adjustment}, \href{https://arxiv.org/pdf/1405.0515.pdf}{ArXiv:1405.0515} (2014).
\bibitem{greg15} Gregory, J.: \emph{The xVA Challenge: Counterparty Credit Risk, Funding, Collateral, and Capital}, Wiley (2015).
\bibitem{greg16} Gregory, J.: \emph{The Impact of Initial Margin in OTC Derivatives}, working paper (2016).
\bibitem{hw14} Hull, J., and White, A.: \emph{Valuing Derivatives: Funding Value Adjustments and Fair Value}, Fin. Anal. J., \textbf{70}, 46 - 56 (2014).
\bibitem{Isda16} ISDA Publications \href{http://www.isda.org}{http://www.isda.org} (2016).
\bibitem{km97} El Karoui N., Mazliak, L.: \textit{Backward Stochastic Differential Equations},
CRC Press, ISBN 9780582307339 (1997).
\bibitem{kpq97} El Karoui N., Peng S., and Quenez M.C.: \textit{BSDEs in Finance},
Mathematical Finance, 7 (1), 1-71 (1997).
\bibitem{kgb15} Kenyon C., Green A., Berrahoui, M.: \textit{Which measure for PFE? The Risk Appetite Measure},
preprint arXiv (2015).
\bibitem{kln13} Kharroubi I., Lim T. and Ngoupeyou A.: \textit{Mean Variance Hedging on Uncertain Time Horizon in a Market with a Jump},
Applied Mathematics and Optimization, 68 (3), 413-444 (2013).
\bibitem{kp92} Kloeden, P., and Platen, E.: \emph{Numerical Solution of Stochastic Differential Equations}, (1992).
\bibitem{llr16} Lesniewski, A., Lewis, H., and Richter, A.: \emph{Wrong Way Risk}, working paper (2016).
\bibitem{lr16} Lesniewski, A., Richter, A.: \emph{Risk Factor Reduction in a BSDE Setting}, working paper (2016).
\bibitem{ls01} Longstaff, F. A., and Schwartz, E. S.: \textit{Valuing American Options by Simulation: A Simple Least-Squares Approach}, Rev. Fin. Stud., 14 (1), 113 - 147 (2001).
\bibitem{mdbs16} Moody's: \href{https://www.moodys.com/credit-ratings/DBS-Bank-Ltd-credit-rating-2922}{\emph{DBS Credit Rating}}, (2016)
\bibitem{pr14} Pardoux, E., and Rascanu, A.: \emph{Stochastic Differential Equations, Backward SDEs, Partial Differential Equations}, Springer (2014).
\bibitem{p04} Protter, P.: \textit{Stochastic Integration and Differential Equations}, Springer (2004).
\bibitem{s13} Stein, H.: \textit{Fixing Underexposed Snapshots - Proper Computation of Credit Exposures Under the Real World and Risk Neutral Measures}, Technical Report, Bloomberg LP, December, 1 - 23 (2013).

\end{thebibliography}
\end{document}